\documentclass[final,12pt]{elsarticle}
\usepackage{a4, amsmath, amssymb, graphics, subfigure, fullpage, color}
\usepackage[colorlinks]{hyperref}
\usepackage{ulem} 
\hypersetup{colorlinks,citecolor=green,filecolor=black,linkcolor=blue,urlcolor=blue}
\usepackage{fancyhdr}
\usepackage{multirow}
\usepackage{amsthm}
\usepackage{xcolor}

\newtheorem{theorem}{Theorem}
\newtheorem{definition}{Definition}
\newtheorem{corollary}{Corollary}

\newtheorem{lemma}{Lemma}

\newtheorem{remark}{Remark}
\newtheorem{assumption}{Assumption}
 \usepackage{soul}

\newcommand{\cA}{\mathcal{A}}

\newcommand{\FF}{\mathbb{F}}

\DeclareMathOperator{\E}{\mathbb{E}}

\renewcommand{\Pr}{\mathbb{P}}

\DeclareMathOperator{\var}{\mathbb{V}}

\usepackage{dsfont}
\DeclareMathOperator{\ind}{\mathds{1}}

\newcommand{\beq}{\begin{equation}}
\newcommand{\eeq}{\end{equation}}
\newcommand{\beqn}{\begin{eqnarray}}
\newcommand{\eeqn}{\end{eqnarray}}

\usepackage{lineno}
\usepackage[utf8]{inputenc}
\usepackage[T1]{fontenc}
\usepackage{bbm}
\usepackage{natbib}
\newcommand{\MyBib}{./MyBib}

\journal{Statistics and Probability Letters}

\begin{document}

\begin{frontmatter}

\title{Exact calibration of structural models via time-change\tnoteref{myfn}}


\author[FV]{Fr\'ed\'eric VRINS}
\address[FV]{UCLouvain, LIDAM/LFIN, Belgium\\ \href{mailto:frederic.vrins@uclouvain.be}{frederic.vrins@uclouvain.be}}
\author[DB]{Damiano BRIGO}
\address[DB]{Imperial College London, Department of Mathematics, United Kingdom\\ \href{mailto:damiano.brigo@imperial.ac.uk}{damiano.brigo@imperial.ac.uk}}
\tnotetext[myfn]{We thank Ludger Overbeck for drawing our attention, after the first posting of
this note, to Overbeck and Schmidt~\cite{OverbeckSchmidt2005}. We had arrived
independently at the Brownian time-change construction; we have revised the
paper to acknowledge that its marginal-calibration result was obtained
previously and to clarify the distinct emphasis of the present note.
}

\begin{abstract}
In this note, we propose a general structural approach to model a default time $\tau$ as the first-passage time (FPT) of a (``firm-value'') process $S$ below a (``debt'') barrier $K$ that comply with a pre-specified survival probability curve $G(t)=\Pr(\tau>t)$. Following an idea of Mbaye \& Vrins applied to reduced-form models, our approach consists in two steps: choose a latent FPT model driven by a barrier $\tilde{K}$ and process $\tilde{S}$, and time-change those using a deterministic clock $\Theta$ to get $K_t=\tilde{K}_{\Theta(t)}$ and $S_t:=\tilde{S}_{\Theta(t)}$, leading to the final FTP model $(K,S,\Theta)$. As the market curve $G$ and the latent model $(\tilde{K},\tilde{S})$ are assumed to be given, the calibration step simply consists in finding the clock $\Theta$ such that the distribution of the FPT of $S$ below $K$ coincides with the survival curve $G$. We show that this is achievable for a broad class of specified curves $G$ and latent FTP models. The calibration amounts to a simple inversion of a function, which is almost immediate provided that the latent model is tractable enough. In particular, we show that the AT1P model of Brigo et al. (2004, 2005, 2011) \--- which is able to reproduce a broad range of CDS term-structures \--- can be regarded as the FPT of a time-changed drifted Brownian motion to a constant barrier: $\tilde{V}_t=\mu t+W_t$ and $\tilde{K}_t=k<0$. This connection offers an elegant interpretation for the instantaneous volatility function featured in AT1P and yields an immediate calibration of the latter to perfectly match a target survival curve.
\end{abstract}

\begin{keyword}
credit risk \sep default time \sep structural model \sep time change \sep calibration.




\end{keyword}

\end{frontmatter}


\section{Introduction}

Structural default models consist in defining the default time $\tau$ of a reference entity as the first passage time (FPT) of a process $S=(S_t)_{t\geq 0}$ below a threshold, possibly time-dependent process $K=(K_t)_{t\geq 0}$ called \textit{barrier}. To ease the exposition, we assume here that $K_t=k(t)$ is a deterministic function, possibly constant, although the stochastic case can be handled, too.

The most popular model is probably Merton's approach. It does not model the default time $\tau$ but focuses instead on a default event $\{\tau<T\}$. The latter is modeled as $\{S_T<k\}$ where  $T$ is a fixed horizon, $S$ is a geometric Brownian motion (GBM) with constant parameters $(\mu,\sigma)$ representing the value of the firm, and $k(t)=k$ is a constant threshold interpreted as the debt level~\cite{Merton74}. The Black-Cox extension, however, explicitly defines $\tau$ as the FPT of the GBM $S$ below $k$~\cite{Black76}. We follow this general mathematical setup here:
\beq
\tau:=\inf\{t:S_t\leq k(t)\}=\inf\{t:S_t-k(t)\leq 0\}\label{eq:tau}
\eeq
for some firm-value $S$ and barrier $k$ processes. Unless stated otherwise, we assume in this paper that $\tau>0$, i.e., $S_0> k(0)$. \medskip

Structural models receive specific attention in the credit risk literature because of their sound economic interpretation. This contrast with intensity (reduced-form) models, for instance. Of course, the standard models recalled above depict the firm-value and the level of debt in a very naive way, to say the least. More economically relevant processes can be considered (e.g., where $S$ features jumps), but this comes at the cost of a reduced analytical tractability when it comes to compute the distribution of the FPT $\tau$ in \eqref{eq:tau}. This highlights the usual compromise between economic plausibility and mathematical tractability in this specific context. Note that, in practice, this specific financial interpretation has to be tempered, as the value of the assets of firms is not observable anyway. A second advantage of structural models has to be found in their ability to generate high credit-credit and market-credit correlations, such as in credit valuation adjustment (CVA) under wrong-way risk (WWR) (see for example \cite{Brigo13}, Chapter 8.4 for equity portolios).

However, a well-known drawback of structural models is their inability to fit market curves. In practice, indeed, the model needs to be calibrated to the market, in the sense that the parameters of the stochastic process $S$ and/or the shape of the barrier $k$ must be chosen such that the associated survival distribution $G^\tau(t):=\Pr(\tau>t)$ agrees with a given (typically market-implied) target survival probability curve $G:\mathbb{R}_+\mapsto(0,1]\;,~t\to G(t)$. The calibration procedure consists of choosing the dynamics of $S$ and the shape of the barrier $k$ such that the probability that $S$ stays above the barrier up to $t$ is given by $G(t)$:
\beq
G^\tau(t):=\Pr\left(\inf_{s\in[0,t]} (S_s-k(s))> 0 \right)=G(t)\;,\quad\forall t\geq 0\;.\label{eq:cal}
\eeq

Computing $G^\tau$ when the default time $\tau$ takes the form \eqref{eq:tau} is a difficult problem, unless the dynamics of the process $S$ and the shape of the barrier $k$ are ``simple enough''. In those cases, however, the model only features very few parameters, and there is \textit{a priori} no reason for the target curve $G$ to belong to the set of parametric curves $G^\tau$ that can be generated by the model at hand.\footnote{In the case of Black-Cox, for instance, the model features 4 parameters $(S_0,\mu,\sigma,k)$ but only three degrees of freedom, because the pair $(S_0,k)$ only shows up as the ratio.} \medskip

Let us leave the economic interpretation of structural models aside to focus on the analytically tractable setups. Perhaps the simplest case is to take as $S$ a Brownian motion $W$ (or geometric Brownian motion with constant parameters) and a linear barrier $k(t)=at+b$ (or an exponential deterministic barrier):

\[
G^\tau(t):=\Pr\left(\inf_{s\in[0,t]} -as+W_s >b \right)\;,\quad\forall t\geq 0\;.
\]
This amounts to study the PFT of a Brownian motion with constant drift $\mu=-a$ below a constant barrier $k(t)=b$, whose analytical expression is well-known; see \eqref{eq:Mtidle} below.\medskip

However, as explained above, those cases are of little practical interest because of their very limited flexibility. Indeed, the number of degrees of freedom is bounded by the number of parameters. This suggests that, unless $G$ has a very particular parametric form, there is no pair $(a,b)$ such that equality $G^\tau(t)=G(t)$ holds at all times $t\geq 0$. One possibility to force the calibration $G^\tau=G$ would be to have a functional parameter, e.g., fix the parameters of $S$, and infer the shape of the deterministic function $k:t\to k(t)$ accordingly. Unfortunately, as explained above, this is not possible because the distribution of the running infimum of $S_s-k(s)$ is known only for very special cases (see~\cite{JinWang17} for a recent review of existing results). \medskip 

To circumvent this problem, we propose a different, very simple route. We start by considering the case of the passage time of the running infimum $\underline{X}$ of a continuous stochastic process $X$ below a constant $k$:
\beq\label{eq:DefModel}
\tau^X:=\inf\{t: X_t\leq k\}=\inf\{t:\underline{X}_t=k\}\quad\text{where}\quad k<X_0\quad\text{and}\quad \underline{X}_t:=\inf_{s\in[0,t]} X_s.
\eeq
We take $X$ to be ``simple enough'' in order for the probability of the events $\{\underline{X}_t>k\}$ to be known in (semi-)closed form,
\beq
G^{\tau^X}(t)=\Pr\left(\underline{X}_t> k\right)=:G^{X}_k(t).\label{eq:Ginf}
\eeq

Next, we consider $X$ as a latent process and model the firm-value as $S= X^\theta$ where $X^\theta$ is a time-changed version of $X$, for some appropriate function $\Theta$ called \textit{clock}. Time-changed approaches have been widely used in quantitative finance, and in credit risk in particular. For instance, time-changing a stochastic process having continuous trajectories with a \textit{subordinator} that exhibits jumps is a very convenient way to introduce jumps in the dynamics of $S$, while preserving semi-analytical properties; see, for instance, \cite{Cont04,Hurd09} for application in credit risk, and \cite{Prof19} for the special case of lazy clocks. Such models help to design more realistic structural models featuring a ``surprise effect'', leading to higher implied short-term spreads. In this note, we are using the time-change technique not to modify the type of sample paths of $S$ in a tractable way, but to enrich the dynamics of  a simple parametric model with the required flexibility in order to match a given target curve $G$.\medskip

In this setup, the dynamics of the latent process $X$ and the (constant) barrier $k$ are assumed to be given, and the calibration problem amounts to finding the implied clock $\Theta$ such that the FPT of $S=X^\theta=(X_{\Theta(t)})_{t\geq 0}$ below $k<0$ has a survival probability function given by the target curve $G$. In other words, instead of adjusting the barrier function $k(t)$ to $G$, we fix $k(t)=k$ and adjust the clock $\Theta$, assuming that such a clock exists. A similar approach was recently introduced by Mbaye and Vrins in \cite{Vrins22} for reduced-form models, as an alternative to the shifted square-root jump diffusion. Time-changing a non-negative intensity process with a clock $\Theta$ instead of shifting it with a deterministic function $\varphi$ solves the negative intensities issue impacting the popular ``deterministic shift'' approach (J)CIR++. This is because the implied shift function $\varphi$ is typically not positive, while time-changing a non-negative process preserves its non-negativity.
\medskip

Deterministic time changes of Wiener threshold models have also been used
for the calibration of prescribed marginal default distributions. In
particular, Overbeck and Schmidt~\cite{OverbeckSchmidt2005} derive the
explicit Brownian time-change formula that matches a given default curve and
combine correlated Wiener drivers with these time changes to address joint
default probabilities and basket-credit applications. We consider this Brownian case as a particular example illustrating our general approach. The clock in Theorem~1 below coincides with their construction after writing
the target default distribution as $F=1-G$. 

The present note builds on this perspective in three directions. First, we show that the time-change idea can be applied not only to calibrate a model relying on a plain Brownian motion with fixed barrier ($X=W, k(t)=k<0)$, but can be extended to a broad class of latent models called \emph{regular} hereafter. We provide sufficient conditions on the target curve and the latent model for a unique implied clock to exist, which is given by
the inverse-survival transformation
\[
\Theta_k^{G,X}=(G_k^X)^{-1}\circ G.
\]
Notably, this includes survival curves with
piecewise constant hazard rates and encompasses some non-diffusive latent models. Second, it identifies the AT1P survival
formula as that of a drifted Brownian first-passage model evaluated at its
integrated variance, thereby interpreting the AT1P variance profile as a
deterministic clock and providing a direct calibration to a prescribed
survival curve. Third, we show that the sharp increase of the clock near the origin highlighted empirically in~\cite{OverbeckSchmidt2005} is actually a singularity: in the Brownian case with fixed barrier, the clock rate explodes ar the origin when the survival curve $G$ is strictly decreasing there. This singularity in the volatility raises questions the validity of the approach. We verify that, in the Brownian case, the singular variance rate associated with the implied clock remains
locally integrable for every regular target survival curve. These results are
useful in particular for relating the time-change construction to the
AT1P/SBTV structural-credit modelling line and to joint credit--market
applications, where the clock also affects instantaneous variance and
cross-variation with other risk factors. Although the proposal to choose the fixed barrier level $k$ so as to make sure the calendar and clock times agree at $t_0$ prevents the clock to deviate too much from identity, this does not   resolve the exploding behavior at the origin if for example  $h(0+):=-(G'/G)(0^+)>0$, as the latter is inherent to the joint behavior of $G,G^X$ at the operational time $0$.
Time-change techniques are thus very promising for calibration purposes in that they allow to ``stretch time'' in a very flexible way. Unfortunately, these approaches also trigger some extra difficulties. For instance, it is not obvious at first sight that there always exists such an implied clock $\Theta$ and, if so, how it can be found. Second, dealing with several time-changed processes at a same time is difficult when the clocks differ. This is for example the case when several dynamic models need to be correlated but have different clocks, as in the WWR CVA example; as time moves differently for the various processes, managing the multiple filtrations is far from handy.

In order to build default models that are practically appealing and easier to deal with, it is important to have a setup framed in the original time scale. To achieve this goal, we rely on a representation theorem of time-homogeneous diffusions as time-changed Brownian motions. This allows us to conceive the default time $\tau$ not as the FPT of a time-changed process $X^\theta$, but as the FPT of a time-homogeneous martingale adapted to the original filtration whose deterministic diffusion coefficient $\sigma(\cdot)$ is determined according to $\langle X^\theta\rangle$, the quadratic variation $X^\theta$.\medskip

In this note, we start by considering the case where the latent process $X=W$ is a Brownian motion. We show that, in this case, the implied clock $\Theta=\Theta^{G,W}_k$ is available in closed form as a function of the threshold $k$ and the function $G$. Next, we explain that our method is actually quite universal, in the sense that it can be applied to a broad range of FPT models. We consider various extensions, where our time-changed approach is used to calibrate Black-Cox or the AT1P model of Brigo et al.~\cite{Brigo13,brigomorinitarenghi,Brigo2004,Brigo2005}. In fact, we show that the latter model is exactly the FPT of a  time-changed Brownian motion with constant drift: one way to calibrate the AT1P model to a given survival curve $G$ is to set $\sigma^2(t)=\theta(t)$ where $\Theta$ is the implied clock. This interpretation of the A1TP model has three key benefits. First, it offers an almost immediate calibration of the AT1P model, as the clock is obtained by simple numerical inversion of a known, continuous curve.\footnote{The calibration considered in~\cite{Brigo13,brigomorinitarenghi,Brigo2004} postulate a piecewise-constant volatility function $\sigma(t)$ whose levels $\sigma_1,\sigma_2,\ldots,\sigma_n$ are estimated iteratively based on a term structure of CDS spreads featuring $n$ tenors, by reverse-engineering the standard CDS pricing formula. It is clear that if $G$ is extracted from such a CDS term structure, our calibration would yield a different volatility function, not necessarily piecewise constant, but our approach would still yield an AT1P model that successfully reprices all the $n$ CDS contracts.} Second, in contrast with the calibration technique proposed in~\cite{Brigo13,brigomorinitarenghi,Brigo2004}, our approach allows to calibrate the AT1P model not just on a finite set of $n$ CDS quotes, but to perfectly replicate a target survival probability curve at all times, such as the piecewise constant hazard rate parametrization associated with the ISDA model, in force in Bloomberg{\tiny\copyright}.\footnote{Note that by ``perfect replication'', we simply mean that the model displays the right survival curve at time 0, i.e., that the survival probability curve $\Pr(\tau>t)$ implied by the model matches the survival curve $G(t)$ extracted from market quotes for all $t\geq 0$. In particular, we stay silent about the future, conditional survival probabilities.} Finally, it is clear from our theory that the calibration approach we propose always lead to a successful calibration. To the best of our knowledge, this is not guaranteed when adopting the calibration method introduced in~\cite{Brigo13,brigomorinitarenghi,Brigo2004}, featuring a piecewise constant volatility. 
\section{Mathematical setup}

The survival probability implied by the model up to time $t$ for a given set of parameters (i.e., the parameters of $S$) is given by \eqref{eq:Ginf}. In this paper, we consider a firm-value process of the form $S=X^\theta$, i.e., as a time-changed transform of a latent process $X$, so that the no-default event can be given by the following equivalent conditions:

\[
\{\tau>t\}\Leftrightarrow \left\{\inf_{s\in[0,t]} S_s> k\right\}\Leftrightarrow \left\{\inf_{s\in[0,t]} X^\theta_{s}> k\right\}\Leftrightarrow \left\{\inf_{s\in[0,t]} X_{\Theta(s)}> k\right\}.
\]

In this paper, we restrict ourselves to consider a specific set of time-changed transforms called \textit{clocks}; see e.g. \cite[Definition 4]{Vrins22}.\footnote{A clock is usually defined as a non-decreasing process and is not restricted to be continuous. The general definition of a clock actually coincides with the concept of \textit{subordinator}. In particular, pure jump processes such as Variance-Gamma can be used, too. In this paper, however, we assume a clock is deterministic and continuous. See e.g. \cite{Cont04}, \cite{Hurd09} and \cite{Prof19}.} We assume in the sequel that $X_0>k$.

\begin{definition}[Clock]\label{def:clock}
A \emph{clock} is an absolutely continuous function
$\Theta:\mathbb{R}_+\to\mathbb{R}_+$ such that $\Theta(0)=0$, $\Theta$ is
strictly increasing, and $\lim_{t\to\infty}\Theta(t)=+\infty$. Equivalently,
\[
\Theta(t)=\int_0^t\theta(s)\,ds,
\]
where $\theta\in L^1_{\mathrm{loc}}(\mathbb{R}_+)$, $\theta(s)>0$ for almost
every $s>0$, and $\int_0^\infty\theta(s)\,ds=+\infty$. We call $\theta$ the
\emph{clock rate}; it is understood only up to equality almost everywhere.
\end{definition}

Thus a clock needs not be differentiable at every time: it is differentiable
almost everywhere, with $\Theta'(t)=\theta(t)$ almost everywhere. In
particular, piecewise linear clocks, and hence clocks generated by piecewise
constant clock rates, are covered by this definition. The next Lemma requires only
that $\Theta$ be continuous and strictly increasing.

\begin{lemma}
Let $\Theta$ be a clock in the sense of Definition \ref{def:clock}. Then, 
$\underline{X^\theta}_t=\underline{X}_{\Theta(t)}$.
\end{lemma}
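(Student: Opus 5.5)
The identity is pathwise and deterministic, so I will fix an arbitrary sample path $\omega$ and work with the function $s\mapsto X_s(\omega)$. By definition,
\[
\underline{X^\theta}_t=\inf_{s\in[0,t]}X^\theta_s=\inf_{s\in[0,t]}X_{\Theta(s)}=\inf\{X_u:\,u\in\Theta([0,t])\}.
\]
The whole argument therefore reduces to identifying the image set $\Theta([0,t])$ with the interval $[0,\Theta(t)]$. Once that is done, the right-hand side is exactly $\inf_{u\in[0,\Theta(t)]}X_u=\underline{X}_{\Theta(t)}$.

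The first key step is to establish $\Theta([0,t])=[0,\Theta(t)]$ using only continuity and monotonicity of $\Theta$, as the remark preceding the lemma suggests.

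\begin{itemize}
\item Since $\Theta$ is increasing with $\Theta(0)=0$, every $s\in[0,t]$ satisfies $0\le\Theta(s)\le\Theta(t)$. This gives the inclusion $\Theta([0,t])\subseteq[0,\Theta(t)]$.
\item Conversely, take any $u\in[0,\Theta(t)]$. Since $\Theta$ is continuous on $[0,t]$, the intermediate value theorem gives some $s\in[0,t]$ with $\Theta(s)=u$. This gives the reverse inclusion.
\end{itemize}

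Absolute continuity, divergence at infinity and strict monotonicity are not needed here. Strict monotonicity would only make $s$ unique, which is irrelevant for an infimum.

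The second step is to substitute this equality of sets into the infimum, which is immediate because an infimum depends only on the set of values taken. The identity then holds for every $\omega$ and every $t\ge0$, hence as processes.

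I do not expect a genuine obstacle. The only delicate point is surjectivity onto $[0,\Theta(t)]$, and that is exactly where continuity of $\Theta$ is essential. Were $\Theta$ allowed to jump (a general subordinator), the image would miss intervals, and one would only get $\underline{X^\theta}_t\ge\underline{X}_{\Theta(t)}$. I would note this in a one-line remark to explain why the clock is restricted to continuous deterministic functions in Definition~\ref{def:clock}.
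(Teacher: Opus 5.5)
Your proof is correct and is essentially the paper's argument. The paper writes the same chain of equalities and justifies the key step $\inf_{s\in[0,t]}X_{\Theta(s)}=\inf_{u\in[0,\Theta(t)]}X_u$ by the continuity and invertibility of $\Theta$; your identification $\Theta([0,t])=[0,\Theta(t)]$ via monotonicity and the intermediate value theorem simply makes that step explicit. Your observation that strict monotonicity is unnecessary also agrees with the paper, which uses exactly this weaker property (continuous and non-decreasing) in its AT1P appendix.
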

\begin{proof}
The proof is straightforward. Indeed,

\beq
\underline{X^\theta}_t=\inf_{s\in [0,t]} X^\theta_s=\inf_{s\in [0,t]} X_{\Theta(s)}=\inf_{s\in [0,\Theta(t)]} X_s=\underline{X}_{\Theta(t)}\;,\label{eq:TCsup}
\eeq
where the first, second and last equalities result from the definition of $\underline{X^\theta}, X^\theta$ and $\underline{X}$, respectively. The third equality results from the invertibility of the (continuous) map $t\mapsto \Theta(t)$.
\end{proof}

In our setup, the process $X$, the constant barrier level $k$ and the target curve $G$ are given. The calibration problem thus simply consists in finding the clock $\Theta$ such that the distribution function of the time-$t$ running infimum of $S=X^\theta$ at $k$ agrees with the curve $G$ given exogenously:
$$G(t)=G^\tau(t)=G^{S}_k(t)=\Pr(\underline{S}_t> k)=\Pr(\underline{X^\theta}_{t}> k)=\Pr(\underline{X}_{\Theta(t)}> k)=G^{X}_k(\Theta(t))\;,$$
where $G^X_k(t):=\Pr(\underline{X}_t>k)$. In other words, the calibration problem amounts to find a specific $\Theta$ called \textit{implied clock}.

\begin{definition}[Implied clock]\label{def:iclock} The \textit{clock implied by $(G,X,k)$} is the time-change function $\Theta^{G,X}_k$ satisfying $G^X_k\left(\Theta^{G,X}_k(t)\right)=G(t)$ for all $t\geq 0$.
\end{definition}

It is easy to check that the implied clock exists, is unique, and is indeed a clock in the sense of Definition~\ref{def:clock} under relatively mild assumptions on $X,G$ whenever $k<X_0$.

\begin{definition}[Regular survival function]\label{def:RegularSP}
A survival function $G:\mathbb{R}_+\to(0,1]$ is called \emph{regular} if
\begin{itemize}
\item $G(0)=1$;
\item $\lim_{t\to\infty}G(t)=0$;
\item $G$ is absolutely continuous on every compact interval and
$G'(t)<0$ for almost every $t>0$.
\end{itemize}
\end{definition}

For a regular survival function, define, almost everywhere,
\[
h(t):=-\frac{G'(t)}{G(t)}>0,
\qquad
H(t):=\int_0^t h(s)\,ds.
\]
Then $G(t)=\exp(-H(t))$ and $g(t):=-G'(t)=h(t)G(t)$ almost everywhere.

The definition permits hazard rates that become arbitrarily small near the origin; no pointwise value $h(0)$ is assumed. Our purpose is not to prescribe which latent process $X$ should represent firm value. Rather, because the method applies to a broad class of first-passage models, we take a latent process $X$ as given and calibrate it to a target curve $G$ by an appropriate time change. Thus, the only required compatibility condition is that $G$ and the latent survival curve $G_k^X$ share the regularity specified below.

\begin{assumption}[$\mathcal{A}_1$]\label{ass:A1}
Both the market survival curve $G$ and the latent survival curve $G_k^X$ are
regular survival functions in the sense of Definition~\ref{def:RegularSP}.
\end{assumption}

The regularity property (of a survival curve) is invariant under precomposition with a clock.  

\begin{lemma}[Invariance under precomposition] Let $G$ be a regular survival function and $\Theta$ be a clock. Then, $G\circ\Theta$ is regular, too. 
\end{lemma}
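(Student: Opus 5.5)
We verify the three items of Definition~\ref{def:RegularSP} for $F:=G\circ\Theta$ in turn. The first two are immediate. Since $\Theta(0)=0$ and $G(0)=1$, we get $F(0)=1$. Since $\Theta(t)\to+\infty$ and $G(u)\to 0$ as $u\to\infty$, we get $F(t)\to 0$. Also $F$ takes values in $(0,1]$ because $G$ does.

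The substantive item is the third: absolute continuity of $F$ on compacts and $F'<0$ a.e. We first argue absolute continuity on $[0,T]$. The function $\Theta$ is absolutely continuous and strictly increasing, so it maps $[0,T]$ onto $[0,\Theta(T)]$, and $G$ is absolutely continuous there. A composition of two absolutely continuous functions need not be absolutely continuous. However, it is when the inner function is monotone: $G\circ\Theta$ is then continuous and of bounded variation. It also satisfies Lusin's (N) property, because $\Theta$ maps null sets to null sets (as it is absolutely continuous) and so does $G$. By the Banach--Zarecki theorem, $F$ is absolutely continuous on $[0,T]$. Alternatively, one can argue directly. Given $\varepsilon$, choose $\delta$ for $G$. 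Any finite family of disjoint intervals $(a_i,b_i)\subset[0,T]$ is mapped by the increasing $\Theta$ to disjoint intervals $(\Theta(a_i),\Theta(b_i))$. Their total length is below $\delta$ once $\sum(b_i-a_i)<\eta$, where $\eta$ comes from the absolute continuity of $\Theta$. This direct argument is the cleaner route, and we would use it.

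It remains to show $F'(t)<0$ for almost every $t$. We would use the chain rule for absolutely continuous functions with monotone inner map: $F'(t)=G'(\Theta(t))\,\theta(t)$ for a.e.\ $t$. The key technical step, and the main obstacle, is to make sure the a.e.-exceptional sets are handled. Let $N\subset\mathbb{R}_+$ be the null set where $G$ is not differentiable or where $G'\ge 0$. We need $\Theta^{-1}(N)\cap\{\theta>0\}$ to be null. The plan is to use the area formula for monotone absolutely continuous maps:
\[
\int_{\Theta^{-1}(N)}\theta(s)\,ds=|N|=0.
\]
This forces $\theta=0$ a.e.\ on $\Theta^{-1}(N)$. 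Since $\theta>0$ a.e., it follows that $\Theta^{-1}(N)$ is null. Likewise, the set where $\Theta$ is not differentiable, or where $\theta(t)\ne\Theta'(t)$, is null.

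Outside the union of these null sets, $\Theta$ is differentiable at $t$ with $\Theta'(t)=\theta(t)>0$, and $G$ is differentiable at $\Theta(t)$ with $G'(\Theta(t))<0$. The ordinary pointwise chain rule then gives $F'(t)=G'(\Theta(t))\theta(t)<0$. This completes the verification that $G\circ\Theta$ is regular.
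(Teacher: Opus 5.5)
Your proof is correct and follows the same route as the paper. You check the endpoint properties from $\Theta(0)=0$ and $\Theta(t)\to\infty$, establish absolute continuity of $G\circ\Theta$ on compacts, and use the chain rule $(G\circ\Theta)'(t)=G'(\Theta(t))\theta(t)<0$ a.e. You are also more careful than the paper's terse argument: you make explicit that monotonicity of $\Theta$ is what makes the composition absolutely continuous, and that $\theta>0$ a.e. is what makes $\Theta^{-1}(N)$ null, which guarantees $G'(\Theta(t))<0$ for almost every $t$.
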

\begin{proof}
Let $G^\Theta(t):=G(\Theta(t))$. The endpoint properties follow from
$\Theta(0)=0$ and $\Theta(t)\to\infty$. Since both $G$ and $\Theta$ are
absolutely continuous on compact intervals, so is $G^\Theta$, and the
chain rule for absolutely continuous functions gives
\[
(G^\Theta)'(t)=G'(\Theta(t))\theta(t)
\quad\text{for almost every }t>0.
\]
This derivative is strictly negative almost everywhere. Hence $G^\Theta$ is
regular.
\end{proof}

In particular, under Assumption~\ref{ass:A1}, the compounded map $t\to G^{X^\theta}_k(t):=G^X_k\circ\Theta(t)$ is a regular survival function.

\begin{lemma}\label{lem:iclock}
Under Assumption~\ref{ass:A1}, the calibration problem admits a unique
implied clock $\Theta_k^{G,X}$. If $I_k^X:(0,1]\to[0,\infty)$ denotes the
inverse of the strictly decreasing function $G_k^X$, then
\[
\Theta_k^{G,X}(t)=I_k^X(G(t)),\qquad t\geq0.
\]
Moreover, with $g:=-G'$ and $g_k^X:=-(G_k^X)'$, one has, for almost every
$t>0$,
\[
(\Theta_k^{G,X})'(t)
=\frac{g(t)}{g_k^X(\Theta_k^{G,X}(t))},
\qquad \Theta_k^{G,X}(0)=0.
\]
\end{lemma}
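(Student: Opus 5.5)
We will establish the claim in four steps: $G_k^X$ is invertible, the candidate $\Theta:=I_k^X\circ G$ is the only possible implied clock, it is a clock in the sense of Definition~\ref{def:clock}, and its derivative has the stated form.

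\emph{Step 1: invertibility.} Under Assumption~\ref{ass:A1}, $G_k^X$ is absolutely continuous with $(G_k^X)'<0$ almost everywhere. For $s<u$ this gives
\[
G_k^X(u)-G_k^X(s)=\int_s^u (G_k^X)'(r)\,dr<0,
\]
so $G_k^X$ is continuous and strictly decreasing on $[0,\infty)$. Since $G_k^X(0)=1$ and $G_k^X(t)\to0$, the intermediate value theorem shows that $G_k^X$ is a bijection from $[0,\infty)$ onto $(0,1]$. Its inverse $I_k^X:(0,1]\to[0,\infty)$ is therefore well defined, continuous and strictly decreasing, with $I_k^X(1)=0$ and $I_k^X(y)\to\infty$ as $y\downarrow0$.

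\emph{Step 2: uniqueness and candidate.} $G$ takes values in $(0,1]$, so $\Theta:=I_k^X\circ G$ is well defined. Any function with $G_k^X(\Theta(t))=G(t)$ must equal $I_k^X(G(t))$ by injectivity of $G_k^X$, which gives uniqueness. By the same argument as in Step 1, $G$ is continuous and strictly decreasing. Hence $\Theta$ is continuous and strictly increasing, with $\Theta(0)=I_k^X(1)=0$, and $\Theta(t)\to\infty$ because $G(t)\to0$.

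\emph{Step 3: absolute continuity (the main obstacle).} The composition of the absolutely continuous $G$ with the merely continuous $I_k^X$ need not be absolutely continuous in general. Indeed, $I_k^X$ is absolutely continuous on compacts of $(0,1]$ only if $G_k^X$ maps null sets of times to null sets and its derivative vanishes on a null set. The Banach--Zarecki criterion applies here: the inverse of a strictly monotone absolutely continuous function $f$ is absolutely continuous iff $f'\neq0$ a.e. This holds by regularity, so $I_k^X$ is absolutely continuous on every $[\varepsilon,1]$.

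Fix $T>0$ and work on $[0,T]$, where $G\geq G(T)>0$. The composition of two monotone absolutely continuous functions, one increasing and one decreasing here, is absolutely continuous, again by Banach--Zarecki: it is continuous, of bounded variation, and has the Luzin (N) property. So $\Theta$ is absolutely continuous on $[0,T]$. Then $\theta:=\Theta'$ is locally integrable, and $\int_0^\infty\theta=\infty$.

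\emph{Step 4: derivative formula.} Differentiate $G_k^X(\Theta(t))=G(t)$ using the chain rule for absolutely continuous functions, as in the invariance lemma. This gives
\[
-g_k^X(\Theta(t))\,\theta(t)=-g(t)\quad\text{a.e.}
\]
Since $\Theta$ is absolutely continuous and strictly increasing, it maps the null set $\{u: g_k^X(u)=0\text{ or undefined}\}$ to a set whose preimage is handled as follows. The set of $t$ with $\Theta(t)$ in a null set $N$ has $\int_{\Theta^{-1}(N)}\theta=0$. On that set $\theta=0$ a.e., yet $g(t)=g_k^X(\Theta(t))\theta(t)$ forces $g=0$ there, which contradicts $g>0$ a.e. unless the set is null. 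So $g_k^X(\Theta(t))>0$ for a.e. $t$. Dividing then yields the stated formula, and $\theta>0$ a.e. follows because $g>0$ a.e. This confirms that $\Theta$ is a clock.
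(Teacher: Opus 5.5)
Your proof is correct and follows the same route as the paper's. You use bijectivity of $G_k^X$, uniqueness by injectivity, absolute continuity of $\Theta=I_k^X\circ G$ via the inverse and composition rules for monotone absolutely continuous functions, and the derivative identity via the chain rule; you simply spell out, through Banach--Zarecki and the null-preimage argument, the ``standard inverse and chain rules'' that the paper invokes without detail. Two cosmetic points: both $G$ and $I_k^X$ are decreasing (not one increasing and one decreasing), and in Step~4 the identity $g=g_k^X(\Theta)\,\theta$ on $\Theta^{-1}(N)$ must be read with the usual chain-rule convention that the product is $0$ when $\theta(t)=0$. Alternatively, replace that step with the observation that $G(\Theta^{-1}(N))\subseteq G_k^X(N)$ is null, which forces $G'=0$ a.e.\ on $\Theta^{-1}(N)$ and hence $|\Theta^{-1}(N)|=0$.
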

\begin{proof}
By Assumption~\ref{ass:A1}, $G$ and $G_k^X$ are continuous strictly decreasing bijections from $[0,\infty)$ onto $(0,1]$, with the stated endpoint limits. Thus $I_k^X$ is well defined and $\Theta(t):=I_k^X(G(t))$ is the unique function satisfying $G_k^X(\Theta(t))=G(t)$. Standard inverse and chain rules for strictly monotone absolutely continuous functions yield that $\Theta$ is absolutely continuous on compact intervals and give the displayed derivative identity almost everywhere. Its derivative is positive almost everywhere; its endpoint properties follow from those of $G$ and $G_k^X$. Hence $\Theta$ is a clock in the sense of Definition~\ref{def:clock}, and it is the implied clock.
\end{proof}

\section{A first approach using a Brownian motion}\label{sec:FPT-BM}

As a guiding example, let us start with the case where $X=W$ where $W$ is a Brownian motion. The law of the running infimum $\underline{W}_t$ can be found by using the reflection principle \cite[Corollary 3.4]{Baldi17} :

\beq
G^{W}_k(t)=\Pr(\underline{W}_t> k)=2\Phi\left(\frac{-k}{\sqrt{t}}\right)-1\;,\quad\forall k< 0.\label{eq:RMBM}
\eeq
Observe that the derivative of this function, i.e., $-g^{W}_k(t)$, vanishes at $t=0$. Yet, $G^W_k$ is regular in the sense of Definition~\ref{def:RegularSP}. This function has a single parameter ($k$), which is not enough to fit an arbitrary target survival curve $G$, in general.\medskip

Let us now construct a model where the default time $\tau^\theta$ is defined as the FPT of \textit{a time-changed Brownian motion} $S=X^\theta$ below $k<X_0=0$. We show that, thanks to the flexibility provided by the time change function, it is possible to make the survival probability function generated by the model agree with any given survival function $G$, at any time $t>0$, provided that Assumption \ref{ass:A1} is met.\footnote{
For $X=W$, formula~\eqref{eq:ThetaCal} is the Brownian deterministic-time-change
construction of Overbeck and Schmidt~\cite{OverbeckSchmidt2005}, expressed
in terms of the survival function $G$ rather than the default distribution
$F=1-G$.
}

\begin{theorem}\label{th:clock}
Let $G$ be a regular survival function and set $X=W$. Then, for each $k<0$, the implied clock is $\Theta:t\mapsto\Theta(t)$ is given by 
\beq\label{eq:ThetaCal}
\Theta^{G,W}_k(t):=\left(\frac{k}{\Phi^{-1}\left(\frac{1+G(t)}{2}\right)}\right)^2\;.
\eeq
\end{theorem}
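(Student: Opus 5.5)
The plan is to reduce the statement to Lemma~\ref{lem:iclock}. That lemma already guarantees that under Assumption~\ref{ass:A1} the implied clock exists, is unique, is a clock in the sense of Definition~\ref{def:clock}, and equals $I_k^W(G(t))$. So two things remain: check Assumption~\ref{ass:A1} for the pair $(G,G^W_k)$, and compute the inverse $I_k^W$ explicitly from the reflection-principle formula \eqref{eq:RMBM}.

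\textbf{Step 1: $G^W_k$ is regular.} $G$ is regular by hypothesis. For fixed $k<0$, set $G^W_k(t)=2\Phi(-k/\sqrt t)-1$ for $t>0$, and extend by continuity with $G^W_k(0)=1$, since $-k/\sqrt t\to+\infty$ as $t\downarrow0$. As $t\to\infty$, $-k/\sqrt t\to0$, so $G^W_k(t)\to2\Phi(0)-1=0$.

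On $(0,\infty)$ the function is $C^\infty$ with
\[
(G^W_k)'(t)=2\varphi\!\left(\frac{-k}{\sqrt t}\right)\frac{k}{2}t^{-3/2}=\frac{k}{t^{3/2}}\varphi\!\left(\frac{k}{\sqrt t}\right)<0,
\]
where $\varphi=\Phi'$. This derivative tends to $0$ as $t\downarrow0$ because the Gaussian factor dominates $t^{-3/2}$. Hence $G^W_k$ is $C^1$ on $[0,\infty)$, so it is absolutely continuous on compacts. Its values lie in $(0,1]$. This gives Assumption~\ref{ass:A1}, and this step is the only mild obstacle: it requires the behaviour at the origin, which the paper flags, namely that the derivative vanishes there.

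\textbf{Step 2: invert $G^W_k$.} For $p\in(0,1]$, solve $2\Phi(-k/\sqrt s)-1=p$. This gives $\Phi(-k/\sqrt s)=(1+p)/2\in(1/2,1]$, so
\[
\frac{-k}{\sqrt s}=\Phi^{-1}\!\left(\frac{1+p}{2}\right)\in(0,+\infty].
\]
Since $-k>0$, squaring yields
\[
I_k^W(p)=\left(\frac{k}{\Phi^{-1}((1+p)/2)}\right)^2,
\]
with the convention $I_k^W(1)=0$, corresponding to $\Phi^{-1}(1)=+\infty$. Substituting $p=G(t)$ and applying Lemma~\ref{lem:iclock} gives \eqref{eq:ThetaCal}, with $\Theta^{G,W}_k(0)=0$ because $G(0)=1$.

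Finally, one can check directly that $G^W_k(\Theta^{G,W}_k(t))=G(t)$ by substitution, which confirms the defining property in Definition~\ref{def:iclock}.
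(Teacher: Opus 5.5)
Your proposal is correct and follows essentially the paper's route: you check that $G$ and $G^W_k$ are both regular so that Lemma~\ref{lem:iclock} applies, and you identify \eqref{eq:ThetaCal} as $(G^W_k)^{-1}\circ G$. You also supply details the paper only asserts, namely the derivative computation showing $G^W_k$ is regular and the explicit inversion. The paper, for its part, additionally restates the conclusion probabilistically as $\Pr(\tau^\theta>t)=G(t)$ via \eqref{eq:TCsup}.
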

\begin{proof}
By Assumption~\ref{ass:A1}, $G$ is regular; the Brownian latent survival curve
$G_k^W$ is regular for $k<0$. Hence Lemma~\ref{lem:iclock} implies that the function
$\Theta_k^{G,W}$ in~\eqref{eq:ThetaCal}, which is the inverse representation
$(G_k^W)^{-1}\circ G$, is a clock in the sense of Definition~\ref{def:clock}. Therefore,
using~\eqref{eq:TCsup},
\beq
\Pr(\tau^\theta>t)=\Pr(\underline{W^\theta}_t> k)=\Pr(\underline{W}_{\Theta(t)}> k)=2\Phi\left(\frac{-k}{\sqrt{\Theta(t)}}\right)-1=G(t)\;.
\eeq
\end{proof}

The above theorem shows that it is actually very easy to build a structural default model that can be perfectly calibrated to a given initial survival function $G$ that is regular: it suffices to define the default time as $\tau^\theta$, the FPT of a time-changed Brownian motion $W^\theta$  below a constant level $k<0$, where the clock $\Theta$ is set according to \eqref{eq:ThetaCal}. Time-change techniques are thus very powerful in that they allow to ``stretch time'' in the appropriate way. Moreover, the implied clock $\Theta^{G,W}_k$ is available in closed form once the threshold $k$ and the curve $G$ are provided: for any threshold $k<0$ and regular survival function $G$, that clock $\Theta$ ensures $G^{\tau^\theta}(t)=G(t)$ for all $t\geq 0$. In other words, the model can be calibrated automatically. To avoid the challenges of handling several clocks (hence, filtrations) at a same time, we recast our framework in the original time scale.\medskip 

The next corollary shows that $\tau^\theta$ can be regarded as the default time of a structural model expressed in the natural time-scale, i.e., as the first passage time $\tau^S$ of an $\FF$-adapted process $S$ below the threshold $k$. The corresponding firm-value process $S$ is a diffusion martingale with deterministic diffusion coefficient $\sigma$ which is determined so as to fit the quadratic variation of $W^\theta$.\medskip

\begin{corollary}\label{cor:sigma}
Consider a model where the default time $\tau^S$ is defined as the first passage time of a time-homogeneous diffusion martingale $S$ below a given threshold $k<0$, i.e.,
$$
\tau^S:=\inf\left\{t:S_t\le k\right\}\quad\text{where}\quad S_t:=\int_0^t \sigma(s)dB_s,\quad k<0,
$$
and $B$ is a Brownian motion. Then, the survival function of $\tau^S$ is $G^{\tau^S}=G$ if the diffusion coefficient is
\beq\label{eq:sigma}
\sigma(t):=-k\sqrt{\frac{g(t)}{\left(\Phi^{-1}\left(\frac{1+G(t)}{2}\right)\right)^3\phi\left(\Phi^{-1}\left(\frac{1+G(t)}{2}\right)\right)}}.
\eeq
\end{corollary}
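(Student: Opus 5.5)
The argument rests on the Dambis--Dubins--Schwarz representation in its deterministic form. The continuous martingale $S_t=\int_0^t\sigma(s)\,dB_s$ has deterministic quadratic variation $\langle S\rangle_t=\int_0^t\sigma^2(s)\,ds$. If we show that $\sigma^2=\theta$ almost everywhere, where $\theta=(\Theta_k^{G,W})'$ is the rate of the implied clock of Theorem~\ref{th:clock}, then $\langle S\rangle=\Theta_k^{G,W}$. It follows that $S$ has the same law, as a process, as the time-changed Brownian motion $W^\theta=(W_{\Theta(t)})_{t\ge0}$. Indeed, both are continuous centred Gaussian processes with independent increments and covariance $\Theta(s\wedge t)$. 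Since $\{\tau^S>t\}=\{\underline S_t>k\}$ depends only on the path law, we get $\Pr(\tau^S>t)=\Pr(\underline{W^\theta}_t>k)=G(t)$ by~\eqref{eq:TCsup} and Theorem~\ref{th:clock}.

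The steps are as follows.

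\emph{Step 1.} Compute $\theta$ from Lemma~\ref{lem:iclock}, $\theta(t)=g(t)/g_k^W(\Theta(t))$. Differentiating~\eqref{eq:RMBM} gives
\[
g_k^W(u)=-\tfrac{d}{du}\bigl(2\Phi(-k/\sqrt u)-1\bigr)=\frac{-k}{u^{3/2}}\,\phi\!\left(\frac{-k}{\sqrt u}\right).
\]
Set $q(t):=\Phi^{-1}\bigl(\tfrac{1+G(t)}{2}\bigr)>0$. Then Theorem~\ref{th:clock} gives $\sqrt{\Theta(t)}=-k/q(t)$, so $-k/\sqrt{\Theta(t)}=q(t)$ and $\Theta(t)^{3/2}=(-k)^3/q(t)^3$. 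Substituting, $g_k^W(\Theta(t))=q(t)^3\phi(q(t))/k^2$, and hence
\[
\theta(t)=\frac{k^2\,g(t)}{q(t)^3\,\phi(q(t))},
\]
whose square root is exactly~\eqref{eq:sigma}, using $-k>0$. As a cross-check, one can instead differentiate~\eqref{eq:ThetaCal} directly, using $(\Phi^{-1})'(y)=1/\phi(\Phi^{-1}(y))$ and $\frac{d}{dt}\frac{1+G}{2}=-g/2$; this gives $\theta=2k^2q^{-3}\cdot\frac{g}{2\phi(q)}$, the same expression.

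\emph{Step 2.} Check that $S$ is well defined, i.e.\ that $\sigma^2=\theta\in L^1_{\mathrm{loc}}$. This holds because $\Theta$ is a clock by Lemma~\ref{lem:iclock}: it is absolutely continuous with $\int_0^t\theta=\Theta(t)<\infty$. So the Itô integral exists and $\langle S\rangle_t=\Theta(t)$.

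\emph{Step 3.} Identify the law, either through the Gaussian-process argument above or through the Dambis--Dubins--Schwarz theorem. In the latter form, $S_t=\beta_{\langle S\rangle_t}$ for some Brownian motion $\beta$, and $\langle S\rangle_t\to\infty$. Then conclude as above.

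The main obstacle is the behaviour at $t=0$. There $q(t)\to\infty$ as $G\to1$, so $\sigma$ may blow up, and one must make sure nothing beyond local integrability is needed. Step 2 handles this, since integrability is inherited from $\Theta$ being finite and absolutely continuous. The pointwise formula only needs to hold almost everywhere, where $G'$ exists.
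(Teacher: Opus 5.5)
Your proposal is correct and follows essentially the same route as the paper: identify $\sigma^2$ with the clock rate $\theta^{G,W}_k$ so that $\langle S\rangle=\Theta^{G,W}_k$, represent $S$ as $W_{\Theta^{G,W}_k(\cdot)}$ via Dambis--Dubins--Schwarz, and conclude with Theorem~\ref{th:clock}. The differences are refinements rather than a different argument: you derive $\sigma^2=\theta$ explicitly, which the paper only asserts here and carries out in the proof of Theorem~\ref{th:sigma}, and your alternative Gaussian-process identification of the law of $S$ avoids both DDS and the condition $\langle S\rangle_\infty=\infty$.
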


\begin{proof} Observe first that $\Sigma^2(t):=\int_0^t\sigma^2(s)ds=\Theta^{G,W}_k(t)$ given in \eqref{eq:ThetaCal}, i.e., that $\sigma^2(t)=\theta^{G,W}_k(t)$ is the corresponding clock rate. On the other hand, $S$ is a local martingale satisfying $\langle S\rangle_t= \Sigma^2(t)= \Theta^{G,W}_k(t)$. Because we know from Theorem \ref{th:clock} that $\Theta^{G,W}_k$ is a clock, $\lim_{t\to\infty} \langle S\rangle_t=+\infty$. From Dambis, Dubins \& Schwarz theorem \cite[Theorem 4.6]{Kara05}, we conclude that the continuous martingale $S$ can be written as $S_t=W_{\langle S \rangle_t}=W_{\Theta^{G,W}_k(t)}$ for some Brownian motion $W$. This proves that
$$
\tau^S=\inf\{t:S_t\leq k\}=\inf\left\{t:W^{\theta^{G,W}_k}_t\leq k\right\}=\tau^{\theta^{G,W}_k}
$$
hence, $G^{\tau^S}(t)=\Pr(\tau^{\theta^{G,W}_k}> t)=\Pr\left(\underline{W}_{\Theta^{G,W}_k(t)}> k\right)=G^{W}_k(\Theta^{G,W}_k(t))=G(t)\;,~~\forall t\ge 0$.
\end{proof}

Observe that the calibration of $G^{\tau^S}$ to $G$ holds automatically and for any threshold $k<0$ both when $S_t=W^\theta_t$ and when $S_t=\int_0^t\sigma(u)dW_u$: it suffices to take the clock $\Theta$ as in \eqref{eq:ThetaCal} in the first setup, and to take the diffusion coefficient $\sigma$ as in \eqref{eq:sigma} in the second setup. 

\begin{remark}\label{rem:ThetaBM:Explosion} In~\cite{OverbeckSchmidt2005}, the authors observe that the clock increases sharply near the origin. They interpret this behavior as a mean to boost the likelihood that the model crosses the barrier quickly.\footnote{A similar observation was made in ~\cite{brigomorinitarenghi} for the piecewise constant volatility parametrization of the AT1P model.} We confirm this observation, and show that this sharp increase is actually a singularity. 
Indeed, expression \eqref{eq:sigma} may lead to $\sigma(0):=\lim_{t\downarrow 0}\sigma(t)=\infty$ depending on the behavior of $G$ around 0. Intuitively, this is because a process $X$ with continuous sample paths, such as $X=W$, cannot cross a barrier $k<X_0$ instantaneously at $t=0$: one needs an exploding clock rate $\theta(t)=\sigma^2(t)$ near 0 in order for an FPT model displaying default predictability to match a curve $G$ that is strictly decreasing from 0. 
Mathematically, for the Brownian latent model,
\[\sigma^2(t)=\theta_k^{G,W}(t) =\frac{g(t)}{g_k^W(\Theta_k^{G,W}(t))} \quad\text{for almost every }t>0. \]
If the target hazard rate has a strictly positive right limit at the origin\footnote{This is typically the case in most practical applications. In the ISDA model, for instance, the $G$ curve exhibits a piecewise constant hazard rate function $h$, where the levels $h_i$, $i=1,\ldots,n$, are all strictly positive.},
\[\lim_{t\downarrow0}h(t)=h(0+)>0, \]
then $g(t)=h(t)G(t)\to h(0+)>0$. On the other hand,
\[\lim_{t\downarrow0}g_k^W(\Theta_k^{G,W}(t))=0,\]
because $\Theta_k^{G,W}(t)\downarrow 0$ and the Brownian first-passage density vanishes at operational time zero. 
Hence the clock rate, and thus $\sigma^2(t)$, may explode at the origin. This behavior may jeopardize the validity of the approach, as this would lead to a firm-value process $S$ with infinite quadratic variation. 

This singularity is, however, always harmless: because $\Theta(t)=\Sigma^2(t)$, any pair of survival curve and latent model for which $\Theta$ is a clock leads to $\Sigma^2(t)<\infty$ for all $t\geq 0$. Squared-integrability of $\sigma$ is thus tightly connected to the existence of a ``clock'' solution to the inversion problem in Lemma~\ref{lem:iclock}. As a consequence,  Assumption $\cA_1$ does not only guarantee the existence and uniqueness of the implied clock, it also ensures the squared-integrability of the volatility function, with no further condition required. The explicit proof of
\[\Sigma^2(t):=\int_0^t\sigma^2(u)\,du<\infty\]
for every $t>0$ and for every regular survival function $G$ in the Brownian case is given in Appendix~\ref{app:sigmaL2} (Theorem~\ref{th:sigma}), as a sanity check.
\end{remark}

\section{Extensions}
In Section \ref{sec:FPT-BM}, we have shown that it is easy to design a structural model displaying the perfect calibration feature: it suffices to define $\tau$ as the FPT of a time-changed version $X^\theta$ of a latent process $X=W$ below a threshold $k<0$, where the time-change function $\Theta$ is set to the implied clock, $\Theta^{G,W}_k$. This can be equivalently defined as the first passage time of the stochastic integral $S_t:=\int_0^t\sigma(u)dW_u$ below $k$ provided that the diffusion coefficient satisfies $\Sigma^2=\Theta^{G,W}_k$. Despite its ability to reproduce any curve $G$ complying with Assumption \ref{ass:A1}, this setup is restrictive in the sense that it features only one free parameter, namely, the threshold $k$. Although the latter can be chosen arbitrarily \--- for any $k<0$, there exists a clock $\Theta$ offering a perfect calibration to $G$ \--- it matters in the sense that it controls the variance of the corresponding firm-value process: from \eqref{eq:sigma}, it is clear that $\var(X_t)\propto k^2$. Therefore, it is useful to examine more general approaches, where the time change is applied to more sophisticated latent processes $X$ than just a standard Brownian motion.\medskip

Interestingly, Lemma~\ref{lem:iclock} suggests that this approach can be successfully applied to a much broader class of processes $X$: it suffices that $G^{X}_k$ is a regular survival function. Therefore, we consider several extensions below. The first extension is to apply the time-change not to a Brownian motion $W$ but to a drifted (and possibly scaled) Brownian motion $\tilde{W}$, i.e., consider $X=\tilde{W}$ as latent process, and then take $S=X^\theta$ as before. The second extension deals with the case where the firm-value process $S=X^\theta$ is the time-changed version of a latent process $X$ that is a mean-reverting GBM, possibly with a random barrier.

\subsection{Drifted Brownian motion}\label{sec:BMdrift}

The distribution of the running infimum $\underline{\tilde{W}}$ of a Brownian motion with drift $\tilde{W}_t=\mu t+W_t$ below a given $k<0$ is well-known, too (see, e.g., \cite[formula 1.2.4 and 1.2.5]{Boro02}):

\beqn
G^{\tilde{W}}_k(t):=\Pr\left(\inf_{0\leq s\leq t}\mu s+W_s> k\right)=\Phi\left(\frac{\mu t-k}{\sqrt{t}}\right)-e^{2\mu k}\Phi\left(\frac{\mu t+k}{\sqrt{t}}\right)\;.
\label{eq:Mtidle}
\eeqn

Unless $\mu=0$, corresponding to the case $\tilde{W}=W$ discussed in Section~\ref{sec:FPT-BM}, this distribution features a sum of standard Normal distributions and, therefore, its inverse does not admit a closed-form expression. In this case, our method still works, but we lack the analytical expression of the implied clock $\Theta^{G,\tilde{W}}_k$. Nevertheless, the latter can be easily obtained by numerical inversion. This can be done in a very efficient way, computationally speaking.\footnote{Observe that if $\mu\leq 0$ the probability to never hit the barrier is zero, $\lim_{t\to\infty} G^{\tilde{W}}_k(t)=0$, such that $G^{\tilde{W}}_k$ is regular. If $\mu>0$, however, there is a positive probability to never hit the barrier because $\lim_{t\to\infty} G^{\tilde{W}}_k(t)=1-e^{2\mu k}$, in which case $G^{\tilde{W}}_k$ is not regular. This is why, in credit risk applications, one usually take $\mu\leq 0$.}\medskip

Let us now model the default time $\tilde{\tau}^\theta$ as the FPT of a time-changed Brownian motion with constant drift $S=X^\theta$ where $X=\tilde{W}$ below $k<0$. The firm-value process becomes
$$
S_t=\tilde{W}^\theta_t=\mu\Theta(t)+W^\theta_t=\mu(t)+\int_0^t\sigma(s)dB_s,
$$
where $\mu(t):=\mu\Theta(t)$ and $\sigma^2(t)=\Theta'(t)$. It is clear that $G^{\tilde{\tau}^\theta}=G$  provided that $\Theta$ is the implied clock $\Theta^{G,\tilde{W}}_k=I^{\tilde{W}}_k(G(t))$, where $I^{\tilde{W}}_k$ is the inverse of $G^{\tilde{W}}_k$ in \eqref{eq:Mtidle}. In contrast with $W$, the case $\tilde{W}$ now features two parameters: $\mu$ and $k$, which are not redundant.

\begin{remark}\label{rem:scaledBM}
The case of a scaled drifted Brownian motion brings no extra benefit  because the FPT of $\hat{W}_t:= \mu t+ \gamma W_t$ below $k$ obviously coincides with the FPT of $\mu t/|\gamma|+W_t$ below $k/|\gamma|$ which can be recovered from \eqref{eq:Mtidle}.
\end{remark}

\subsection{The AT1P model}\label{sec:AT1P}

A very broad class of FPT models can be calibrated using our time change approach: it suffices to consider the corresponding firm-value process $S$ as our latent process $X$, and define the new firm-value process as $X^\theta$ where $\Theta$ is the clock implied by $G,X$. Let us illustrate this on the Brigo-Morini-Tarenghi (BMT) model, also known as the AT1P model~\cite{brigomorinitarenghi}, which considers a mean-reverting GBM firm-value process $S$ and a barrier $k(t)$ which is time-dependent. Precisely, the default time is modeled as the FPT of $S^{\tiny\text{AT1P}}_t-k(t)$ in the negative territory:\footnote{We use $K_0$ instead of $H$ here to avoid confusion with the integrated hazard rate function.}
\beqn
S^{\tiny\text{AT1P}}_t&=&S_0\exp\left\{\int_0^t(r_u-p_u)-\frac{1}{2}\sigma^2(u)du+\int_0^t\sigma(u) dW_u\right\}\nonumber\\
k(t)&=&K_0\exp\left\{\int_0^t r_u - p_u - B\sigma^2(u) du\right\}
\eeqn
where $r$ is the risk-free rate process, $p$ the payout ratio process, $\sigma$ the (deterministic) instantaneous volatility and $K_0,B$ are constants satisfying $0<K_0<V_0$.\footnote{To remain in the regular setting, one needs to impose a negative drift $\mu\le 0$, i.e., $B\le 1/2$. Otherwise there is a non-zero probability of never hitting the barrier, i.e., $\lim_{t\to\infty}G^{\text{AT1P}}(t)>0$. Also, the authors of AT1P considers a piecewise constant volatility parametrization when calibrating the model to a CDS term-structure. We consider that this feature is not inherent to the definition of AT1P, but is just one possible parametrization of the AT1P model.} This model leads to a tractable expression for the survival probability curve:
$$
G^{\tiny\text{AT1P}}(t):=\Pr(\tau^{\tiny\text{AT1P}}>t)\quad\text{where}\quad \tau^{\tiny\text{AT1P}}:=\inf\{t:S^{\tiny\text{AT1P}}_t-k(t)\leq 0\},
$$
which is given by
\beq
G^{\tiny\text{AT1P}}(t)=\Phi\left(\frac{\mu\Sigma^2(t)-\log\frac{K_0}{S_0}}{\Sigma(t)}\right)-\left(\frac{K_0}{S_0}\right)^{2\mu}\Phi\left(\frac{\mu\Sigma^2(t)+\log\frac{K_0}{S_0}}{\Sigma(t)}\right)\label{eq:G-AT1P}
\eeq
where $\mu:=B-1/2$ and  $\Sigma^2(t)=\int_0^t\sigma^2(u)du$, as before.\footnote{This expression can be easily recovered thanks to our time-change trick; see Appendix~\ref{app:AT1PasTC}.}\medskip

Clearly, the processes $r,p$ play no role in the survival probability. For a given pair of parameters $(K_0,B)$, the authors propose to choose the volatility function so as to best fit the term-structure of CDS. Precisely, the volatility function is made piecewise constant between the available CDS tenors $T_1,\ldots,T_n$, i.e., $\sigma(t)=\sigma_i$ for all $T_{i-1}<t\leq T_i$, where the levels $\sigma_1,\ldots,\sigma_n$ are computed iteratively to match the market quotes. We refer to Section~\ref{app:CalAT1P} for a discussion about this specific parametrization of AT1P.\medskip

Our extension consists in taking the AT1P firm-value process with constant colatility $\sigma(t)=\sigma$ as latent process\footnote{That is, replace $\Sigma^2(t)$ by $\sigma^2 t$ in \eqref{eq:G-AT1P}.}, $X=S^{\tiny\text{AT1P}}$, and time-changing both $X$ and $k$ using a clock $\Theta$ so as to fit $G$.\footnote{One could start from a time-varying $\sigma$ in AT1P but its effect will be completely absorbed in the time change, such that there is nothing to gain in doing so.} This procedure is possible because \eqref{eq:G-AT1P} is a regular survival function, and has three advantages. First, the calibration step, which in the AT1P model consisted in building the piecewise constant function $\sigma$, now simply amounts to compute the implied clock of an AT1P model with a fixed $\sigma$, which can be easily recovered by simple numerical inversion of $G^{\tiny\text{AT1P}}$ in \eqref{eq:G-AT1P}: $\Theta^{G,\tiny\text{AT1P}}(t)=I^{\tiny\text{AT1P}}(G(t))$. This is a substantial benefit when the model needs multiple recalibrations, e.g., in case of joint calibration to $G$ and option prices, for instance. Second, our approach allows to fit the full functional $G$, not only a finite set of spreads or probabilities. This is important for instance if one needs the model to be calibrated to the full ISDA model, assuming piecewise constant hazard rate functions, as this would require the point-by-point identification of the $\sigma$ function. Finally, no mathematical evidence was provided to show that there always exists a function $\sigma$ piecewise constant between the tenors, allowing AT1P to match any given term structure of CDS spreads.\footnote{However, the authors show from the Lehman case that the model can reproduce backwardation markets, suggesting that the AT1P model with piecewise constant volatility parametrization is quite flexible.} In contrast, it is known from Lemma~\ref{lem:iclock} that the implied clock $\Theta^{G,\tiny\text{AT1P}}$ exists and is unique.\medskip

 Interestingly, the AT1P model itself (i.e., not the time-changed AT1P with constant volatility discussed above, but the original AT1P model with the time-dependent $\sigma$) displays very strong connections with our time-change approach. In fact, the AT1P model is nothing else than the FPT of a drifted Brownian motion $\tilde{W}$ with drift $\mu=B-1/2$, time-changed using $\Theta:t\mapsto \Theta(t):=\Sigma^2(t)$ \--- which is indeed a clock in the sense of Definition \ref{def:clock} \--- below a barrier $k=\log(K_0/S_0)<0$. This is relatively obvious when comparing \eqref{eq:Mtidle} with \eqref{eq:G-AT1P}: replacing $t$ by $\Sigma^2(t)$ in the former yields the latter. In other words, $G^{\tiny\text{AT1P}}(t)=G^{\tilde{W}}_{k}(\Sigma^2(t))$ where $k=\log(K_0/S_0)$ and the drift of $\tilde{W}$ is $\mu=B-1/2$; see Appendix \ref{app:AT1PasTC}.\medskip
 
 This relationship between AT1P and the time-changed Brownian motion with drift discussed in Section~\ref{sec:BMdrift} allows us to make three points. First, the original calibration of AT1P poposed in~\cite{brigomorinitarenghi,Brigo2004}, which consisted in finding a volatility function $\sigma:t\to\sigma(t)$, essentially amounts to determine the clock implied by the FPT of a drifted Brownian motion, $\Theta^{G,\tilde{W}}_{\log(K_0/S_0)}$.\footnote{This is not exactly the same because the authors propose to determine the levels $\sigma_i$ to match a set of spreads, but one could have equivalently decided to set the levels to match instead the SP curve of some standard bootstrapping model such as ISDA at the tenors $T_i$.} Second, the interpretation of the deterministic volatility function featured in the original AT1P model is exactly to act as a time change, explaining why it offers the flexibility required for the calibration exercise. Third, this connection illuminates the role of the diffusion coefficient and explains how one can calibrate AT1P not just to match a set of quotes, but to fit a continuous SP curve: it suffices to impose the volatility function $\sigma(\cdot)$ in AT1P such that $\Sigma^2(t)=\Theta^{G,\tilde{W}}_{\log(K_0/S_0)}(t)$.\footnote{This interpretation actually proves that there always exists a set $\sigma_1,\ldots,\sigma_n$ allowing the AT1P model to fit probabilities at the tenors $T_1,\ldots,T_n$ provided that $G$ satisfies Assumption $\ref{ass:A1}$; see Appendix~\ref{app:CalAT1P}. This calibration is different from but very similar to that considered by Brigo et al., which is based on the CDS par spreads.}\medskip

The authors of \cite{brigomorinitarenghi} noticed that, in AT1P, the short-term volatility $\sigma(t)$ for $t$ close to 0, i.e., $\sigma_1$, is extremely large when the model assumes a piecewise constant volatility function calibrated to actual market quotes. Our analogy helps understand the reason. Indeed, recall from Remark~\ref{rem:ThetaBM:Explosion} that $\theta^{G,W}_k(t)$, i.e., the coefficient $\sigma(t)$ in \eqref{eq:sigma}, explodes as $t\downarrow 0$. The same happens for $\theta^{G,\tilde{W}}_k(t)$. The intuition is related to the default predictability inherent to structural models, and was discussed in Remark~\ref{rem:ThetaBM:Explosion}. 

More precisely, let $\Theta=\Theta_k^{G,\widetilde W}$ be the clock implied by the drifted-Brownian latent model. Its rate is
\[\theta(t)=\frac{g(t)}{g_k^{\widetilde W}(\Theta(t))}\quad\text{for almost every }t>0.\]
If the target hazard rate admits a strictly positive right limit at the origin, i.e.,
\[\lim_{t\downarrow0}h(t)=h(0+)>0,\]
then $g(t)=h(t)G(t)\to h(0+)>0$. On the other hand,
\[g_k^{\widetilde W}(\Theta(t))\to0,\qquad t\downarrow0,\]
because $\Theta(t)\downarrow 0$ and the first-passage density of the
drifted Brownian latent process vanishes at operational time zero.
Consequently, the implied clock rate, which is the implied variance rate
$\sigma^2(t)$, diverges as $t\downarrow 0$.\medskip

This feature is the familiar short-end limitation of continuous-path structural models with a deterministic positive initial distance to default: the default probability is smaller than first order as $t\downarrow 0$, so the instantaneous credit spread vanishes. It can be mitigated by an exploding short-end variance rate, jumps in the firm-value process, or a randomized barrier with mass sufficiently close to the initial firm value; see, for example, the discussion and the SBTV construction in \cite{Brigo13,brigomorinitarenghi,Brigo2005}. In \cite{Brigo13} a proof is given showing that, for the Merton model, the short term intensity of credit spreads tends to zero when the maturity tends to zero. The proof can be extended to Black Cox. 

Nevertheless, even if the chosen latent model suffers from the vanishing short-term spread problem, it will be resolved thanks to the implied clock. Indeed, the CDS term structure only depends on the survival probability curve $G$. Therefore, the perfect fit of the model implied curve to the market survival curve guarantees that they will display the same short-term spreads behavior, too. In our setting, this is possible because the implied clock rate can explode near the origin without breaking square-integrability, as explained in Remark~\ref{rem:ThetaBM:Explosion}.\medskip

The numerical examples in Figure~\ref{fig:AT1P} illustrate this effect for the original AT1P specification and for its time-changed representation. The top panels correspond to $G$ with $h(t)=10\%$, with $k=-1$ and $k=-5$, respectively. The bottom panels correspond to the target curve $G$ built from a piecewise constant hazard function with $k=-1$.

\subsection{Other latent models}

The two latent models considered above can be recast as FPT of Brownian motions with linear drift below a constant barrier. However, this is not a restriction to our approach. For instance, one could consider jump-diffusion dynamics for $X$ and/or random barriers for $k$. The only constraint is that the model associated with $(X,k)$ admits a regular survival curve.\medskip

For instance, consider the case where $k(0)$ is random. An example of such models is the scenario-based threshold volatility (SBTV) of Brigo et al.~~\cite{brigomorinitarenghi}, where $k(0)=K_0$ is a discrete random variable with two states. It is easy to see that our approach can deal with such cases, under mild conditions. For instance, replacing $K_0$ by $S_0e^{-\tilde{h}}$ leads to a survival curve
\[
G^{X}(t)=\E\left[\Phi\left(\frac{\mu\gamma^2 t+{\tilde{h}})}{\gamma\sqrt{t}}\right)-e^{-2\mu {\tilde{h}}}\Phi\left(\frac{\mu\gamma^2 t-{\tilde{h}}}{\gamma\sqrt{t}}\right)\right]\;.
\]

This is a regular survival curve whenever $\mathbb P(K_0<S_0)=1$ (i.e., $\tilde h>0$ a.s.) and $\mu\le 0$; the latter condition, as in the single-scenario case, ensures that the survival probability vanishes as $t\to\infty$ for every realization of the barrier, and hence after mixing over scenarios.

\section{Numerical Examples}\label{sec:simu}
\subsection{Time-changed Brownian Motion}
We set $k=-1$ and $G(t)=e^{-ht}$ with $h=10\%$. We first check the distribution of the running infimum of a Brownian motion as follows. We generate $N=1,000$ trajectories of a Brownian motion $W$ over $[0,T]$ with time step $\delta t=0.01$ and $T=n\delta t=20$. We then track the running infimum $M$ by computing, for each $t_i:=i\delta t$, $i\in\{0,1,\ldots,n\}$, $M[,i]=\min(M[,i-1],W[,i])$. We then plot on Figure \ref{fig:BM1} below the theoretical distribution function $\Pr(M_{t_i}>k)$ in \eqref{eq:RMBM} (black, solid) with the empirical version $\hat{\Pr}(M_{t_i}>k):=\frac{1}{N}\sum_{j=1}^N \ind_{\{M[j,i]>k\}}$ (red, dots). The two curves agree, as one can check on Fig. \ref{fig:BM1}. We then do the same for the time-changed process
\[X_t=\int_0^t\sigma(s)\,dW_s=W_{\Theta(t)},\]
where $\Theta=\Theta_k^{G,W}$ is the implied clock. At the monitoring dates,
we simulate $X$ by its exact Gaussian-increment scheme
\[X[j,i]=X[j,i-1]+\sqrt{\Theta(t_i)-\Theta(t_{i-1})}\,Z[j,i-1],\]
where the $Z[j,i]$ are i.i.d.\ standard Normal random variables. The
running infimum is monitored only on this time grid and hence provides a
discretely monitored approximation to the continuous-time running infimum.
Again, the empirical distribution of the discretely monitored running
minimum of $X$ agrees closely with $G$; see Fig.~\ref{fig:TC-BM1}.

The implied clock $\Theta^{G,W}_k$ is drawn on Fig. \ref{fig:Clock-BM1}. The panels on the middle row, Fig. \ref{fig:BM}, \ref{fig:TC-BM} and \ref{fig:Clock-BM}, show similar results when $k=-5$. Finally, we the panels on the last row correspond to the case where $G$ is built from the piecewise constant hazard rate function given in Table 2 in \cite{brigomorinitarenghi}.

\begin{figure}
\centering
\subfigure[Brownian motion]{\includegraphics[width=0.32\columnwidth]{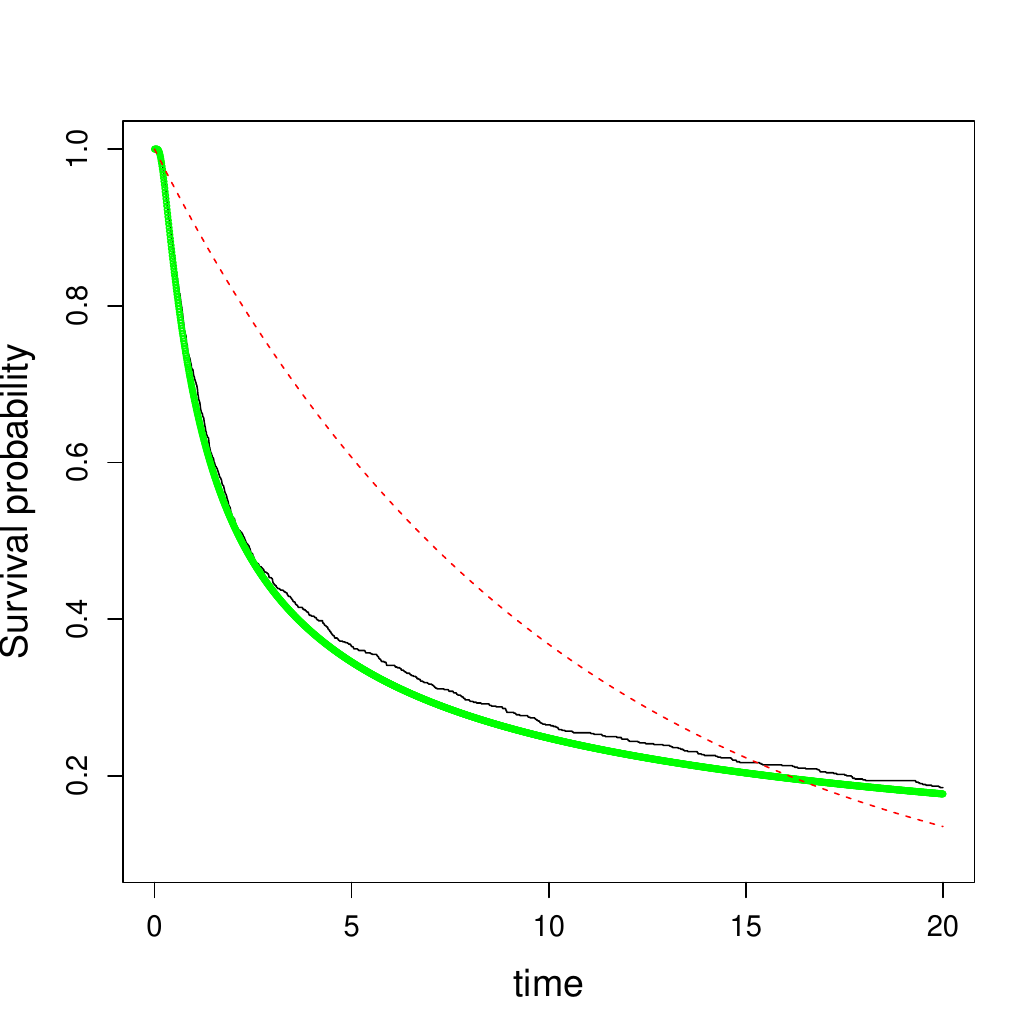}\label{fig:BM1}}
\subfigure[TC Brownian Motion]{\includegraphics[width=0.32\columnwidth]{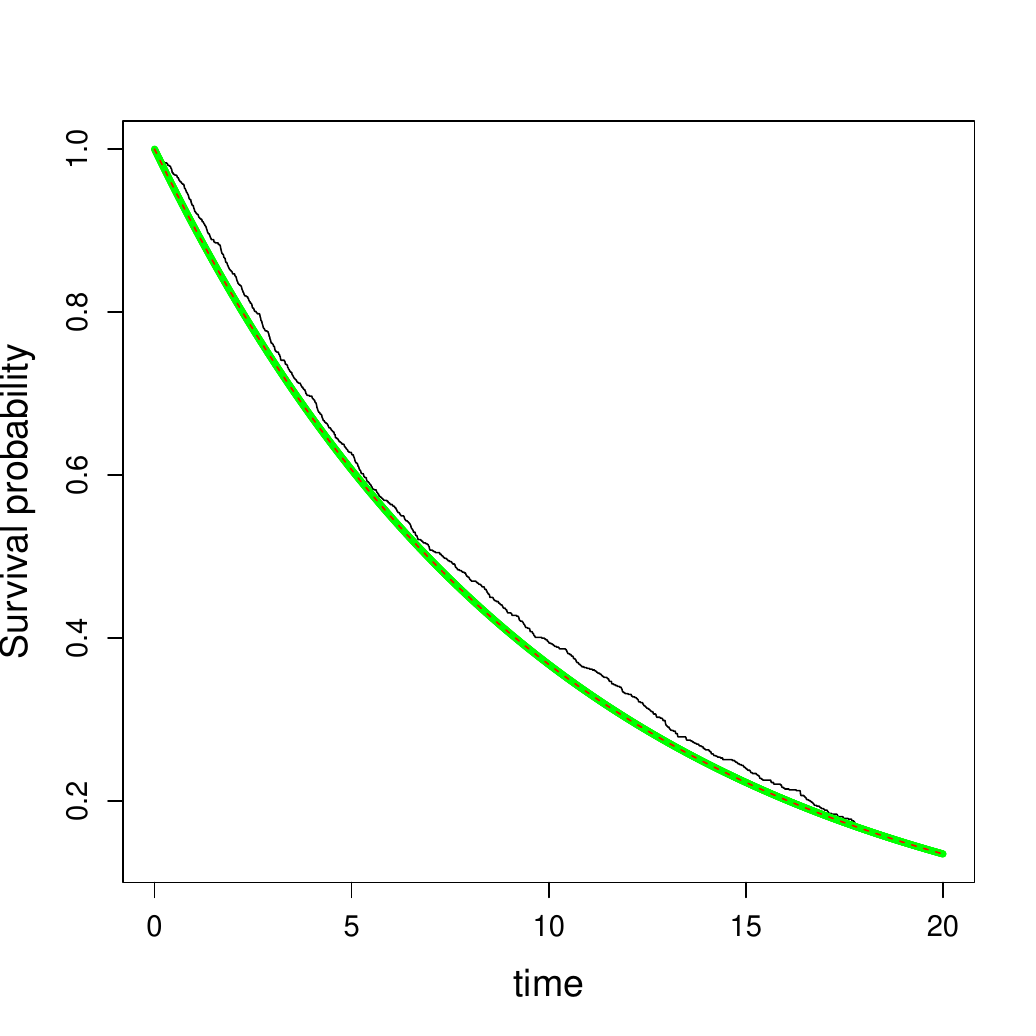}\label{fig:TC-BM1}}
\subfigure[Implied clock]{\includegraphics[width=0.32\columnwidth]{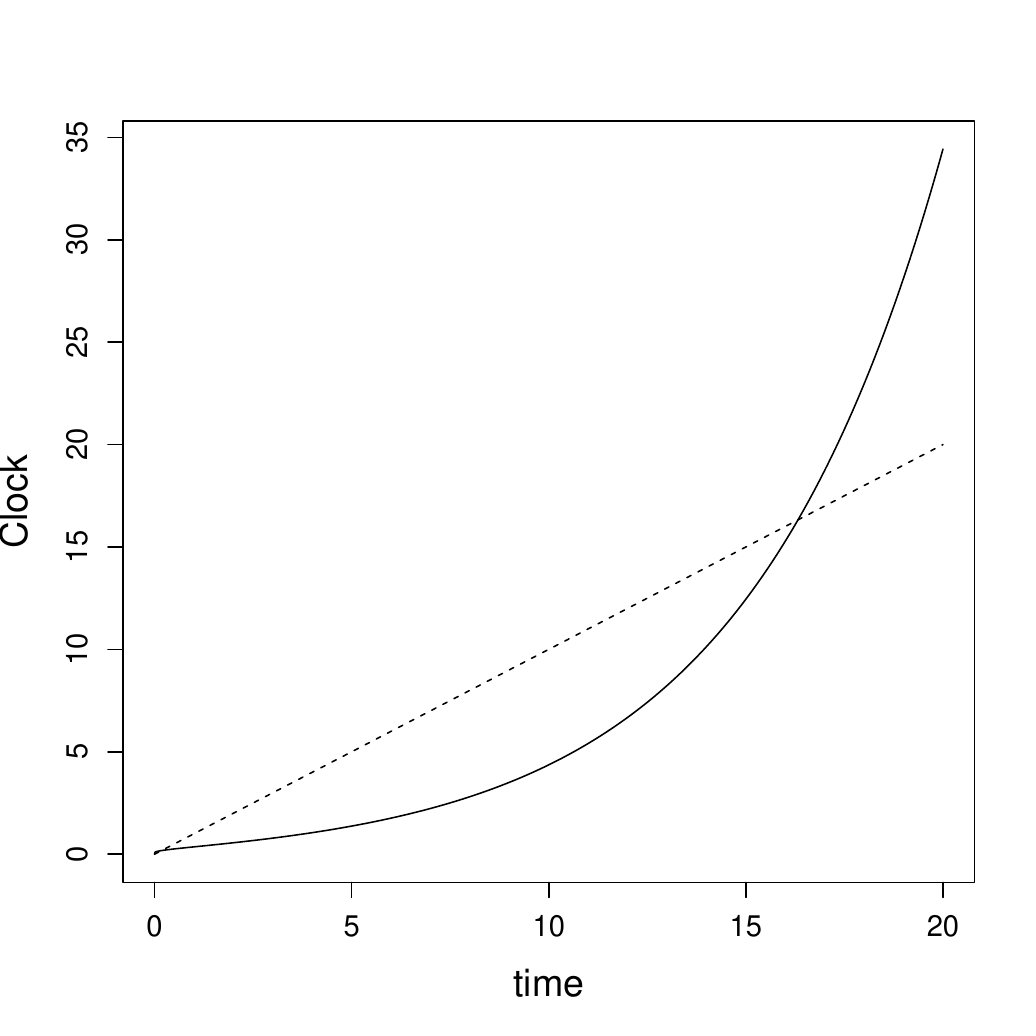}\label{fig:Clock-BM1}}\\
\subfigure[Brownian motion]{\includegraphics[width=0.32\columnwidth]{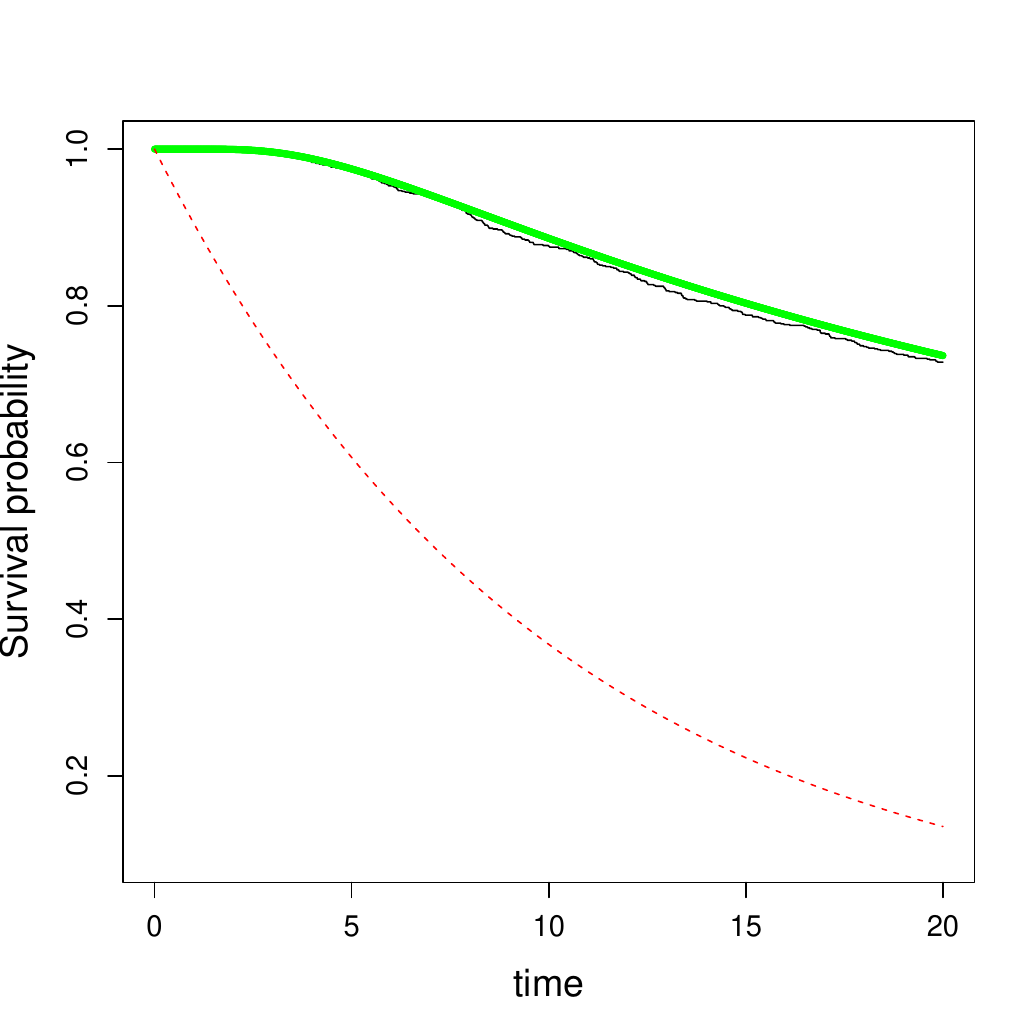}\label{fig:BM}}
\subfigure[TC Brownian motion]{\includegraphics[width=0.32\columnwidth]{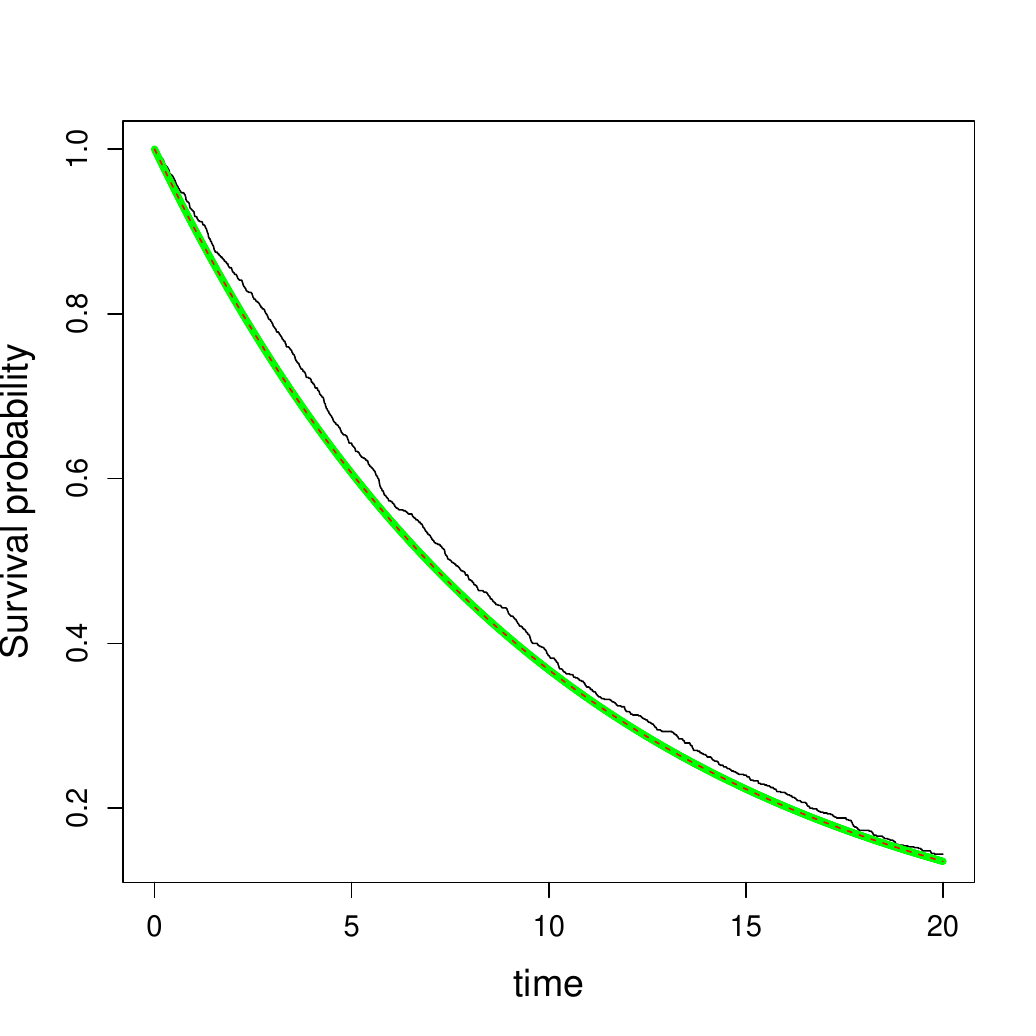}\label{fig:TC-BM}}
\subfigure[Implied clock]{\includegraphics[width=0.32\columnwidth]{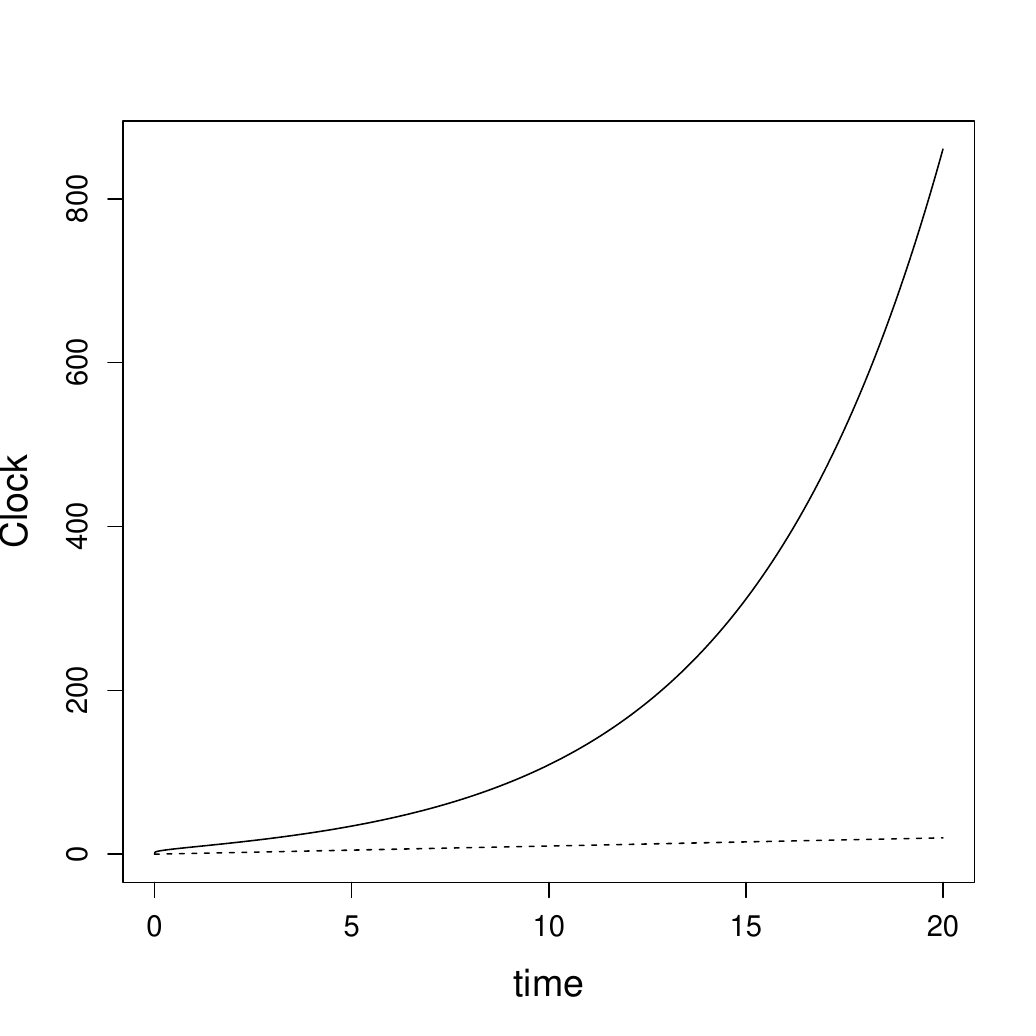}\label{fig:Clock-BM}}\\
\subfigure[Brownian motion]{\includegraphics[width=0.32\columnwidth]{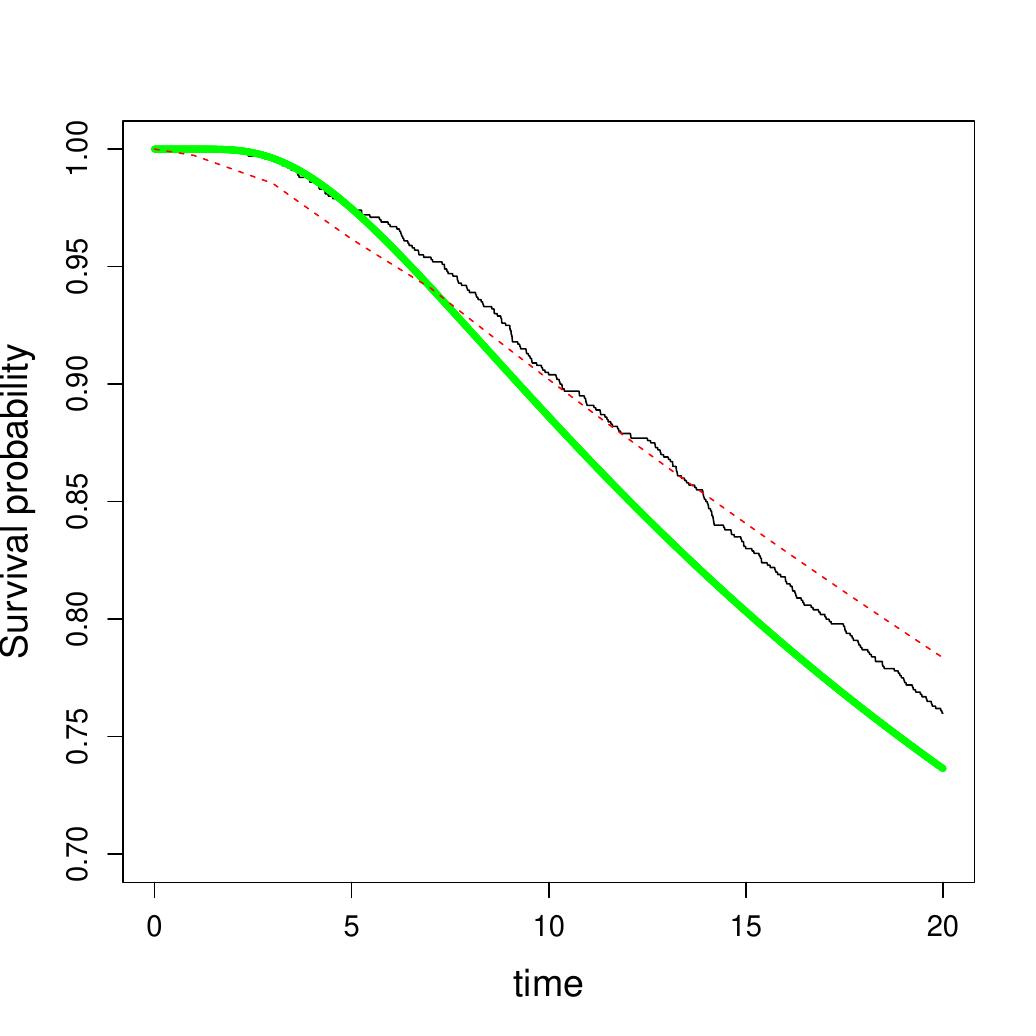}\label{fig:BM-pwc}}
\subfigure[TC Brownian motion]{\includegraphics[width=0.32\columnwidth]{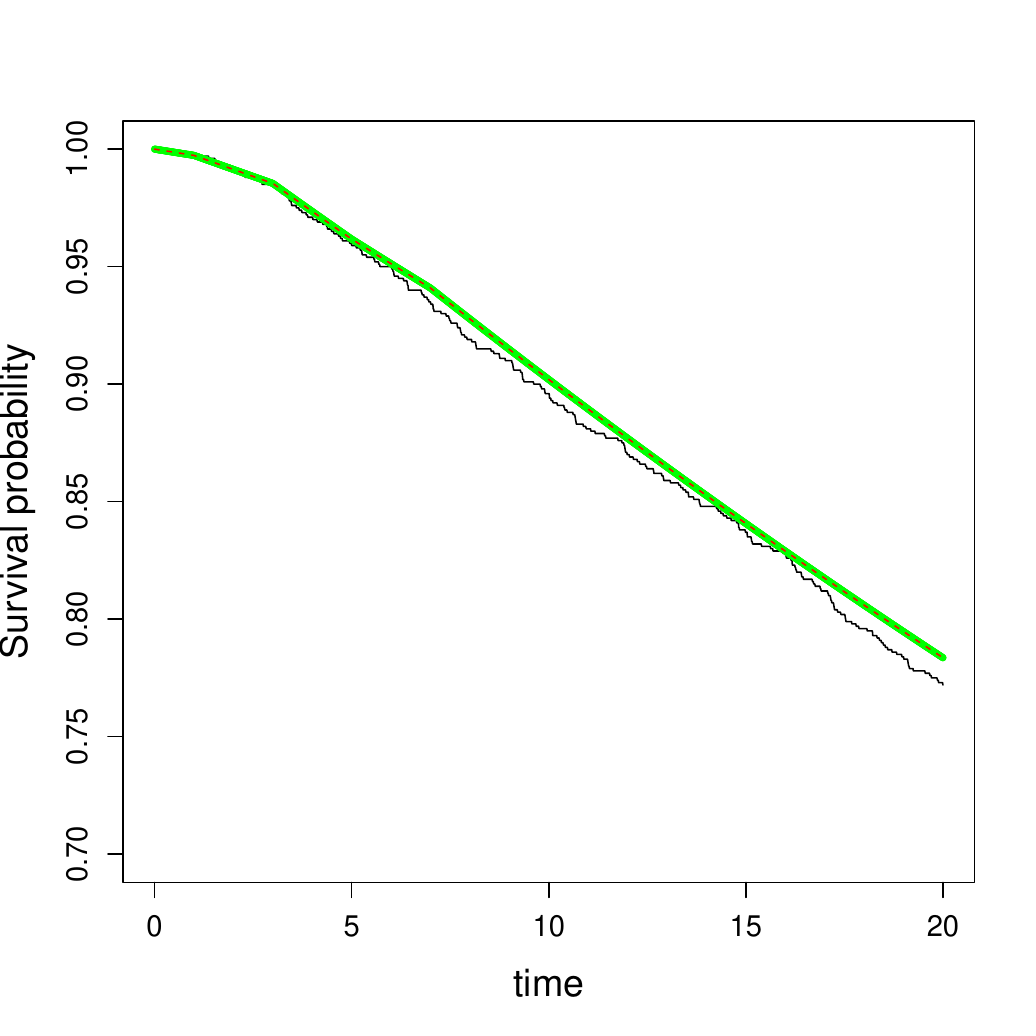}\label{fig:TC-BM-pwc}}
\subfigure[Implied clock]{\includegraphics[width=0.32\columnwidth]{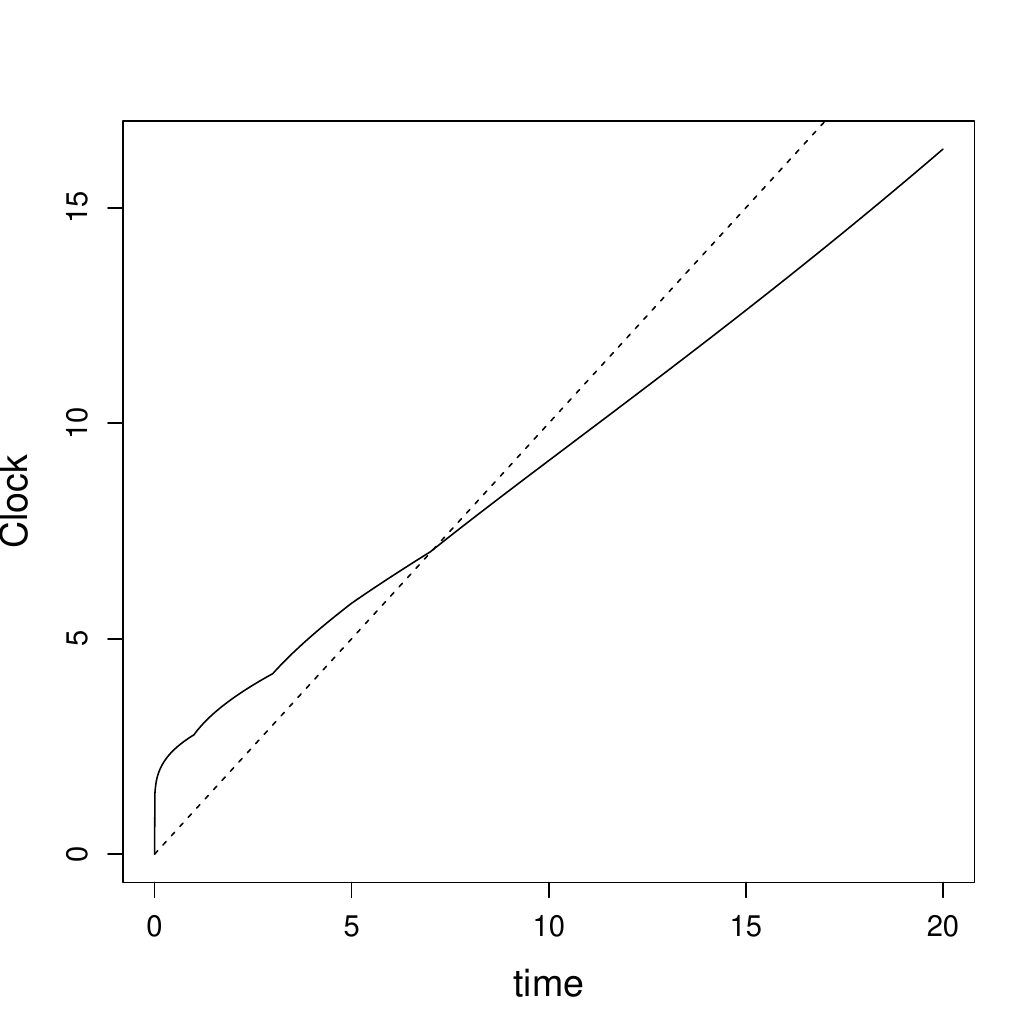}\label{fig:Clock-BM-pwc}}
\caption{Time-changed Brownian motion model and its perfect fit feature. Left: empirical (black) and theoretical (green) survival distributions of the running infimum $\underline{W}$ of a Brownian motion $W$ and the target curve $G(t)$ (red). Center: the empirical (black) and theoretical (green) survival distribution function of the default time in the time-changed Brownian motion with clock $\Theta^{G,W}_k$ given by \eqref{eq:ThetaCal} agree with $G$ (red). Right, the implied clock $\Theta^{G,W}_k$ is evolving around $t$: it varies around the calendar time to compensate for the discrepancies. Top: $G(t)=\exp(-0.1 t)$ and $k=-1$.
Middle: $G(t)=\exp(-0.1 t)$ and $k=-5$. Bottom: $G$ built from the piecewise constant hazard rates in Table 2 of \cite{brigomorinitarenghi} and $k=-5$.}\label{fig:GlobBM1}
\end{figure}

\subsection{AT1P}

We take the model in Section~\ref{sec:AT1P} of with parameters $H=0.4, S_0=1,\sigma(u)=\sigma=0.15$ and $B=0$. We note the corresponding survival probability curve $G^{AT1P}(t;0,0.15,1,0.4)$; it corresponds to the green line on Fig.~\ref{fig:AT1P} This theoretical expression is in line with the first passage time empirical distribution found by Monte Carlo. We then ask ourselves how we can fit this model to the curve $G$ built using the piecewise constant hazard rate function given in Table 2 of \cite{brigomorinitarenghi}. From the previous discussion, it is enough to compute the implied clock $\Theta(t):=\Theta^{G,\tilde{W}}_{\log(K_0/S_0)}=H^{AT1P}(G(t);0,0.15,1,0.4)$. We compare on Fig. ~\ref{fig:TC-AT1P} the $G$ curve (red) with the empirical distribution found by looking at the first passage time of $S^\theta_t=S_{\Theta(t)}$ to the barrier $k^\theta(t)=k(\Theta(t))$ (black). The simulation of $S^\theta$ is done as follows: $S^\theta[,i]=S_0e^{-\Theta(t_i)²/2\sigma^2+X[,i]}$ where $X[,i]=X[,i-1]+\sigma\sqrt{\frac{\Theta(t_i)-\Theta(t_{i-1})}{\delta t}}(W[,i]-W[,i-1])$ and $X[,0]=0$. Then, we track the running infimum $M$ of the difference $V^\theta[,i]-k^\theta(t_i)$, i.e. $M[,i]=\min(M[,i-1],V^\theta[,i]-k^\theta(t_i))$, and eventually estimate the time-$t_i$ survival probability empirically as $\frac{1}{N}\sum_{j=1}^N \ind_{\{S[j,i]\geq 0\}}$.\footnote{In the Brownian case with fixed barrier, one could easily correct for the possibility of $W$ to hit $k$ between two grid points using a Brownian Bridge; see e.g.~\cite{OverbeckSchmidt2005} or~\cite{Vrins26}. However, we do not apply this enhancement here as our main purpose here is to illustrate the theoretical results.} The two curves agree showing that the first passage time of the time-changed firm-value process $S^\theta$ to the barrier $k^\theta_t$ agrees with $G$: the calibration is effective. These curves also match the theoretical survival probability curve of the time-changed AT1P with fixed volatility (green). The implied clock $\Theta=\Theta^{G,\tilde{W}}_{\log(K_0/S_0)}$ is shown on Figure~\ref{fig:Clock-AT1P}. The bottom panels of Fig.~\ref{fig:AT1P} display the same figures for $B=1/2$.

\begin{figure}
\centering
\subfigure[AT1P model]{\includegraphics[width=0.32\columnwidth]{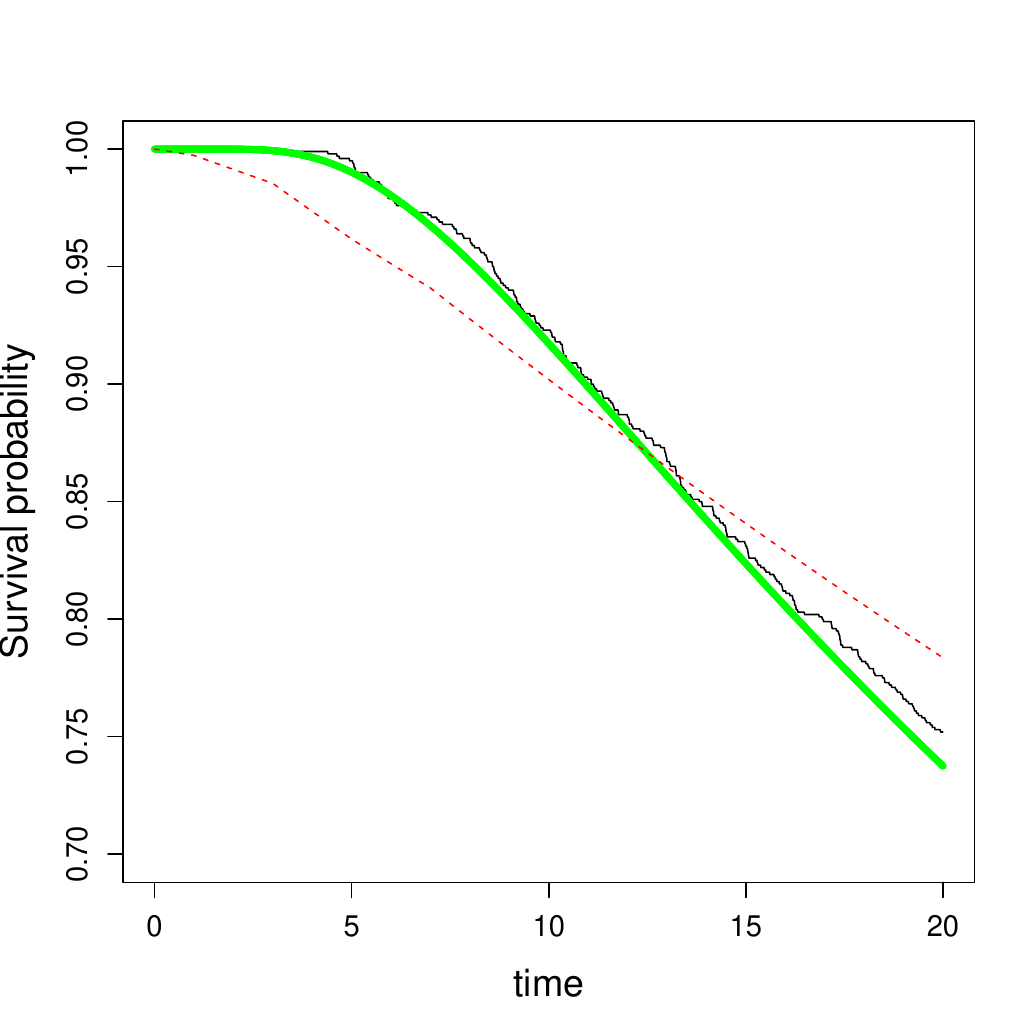}\label{fig:SP-AT1P}}
\subfigure[TC AT1P model]{\includegraphics[width=0.32\columnwidth]{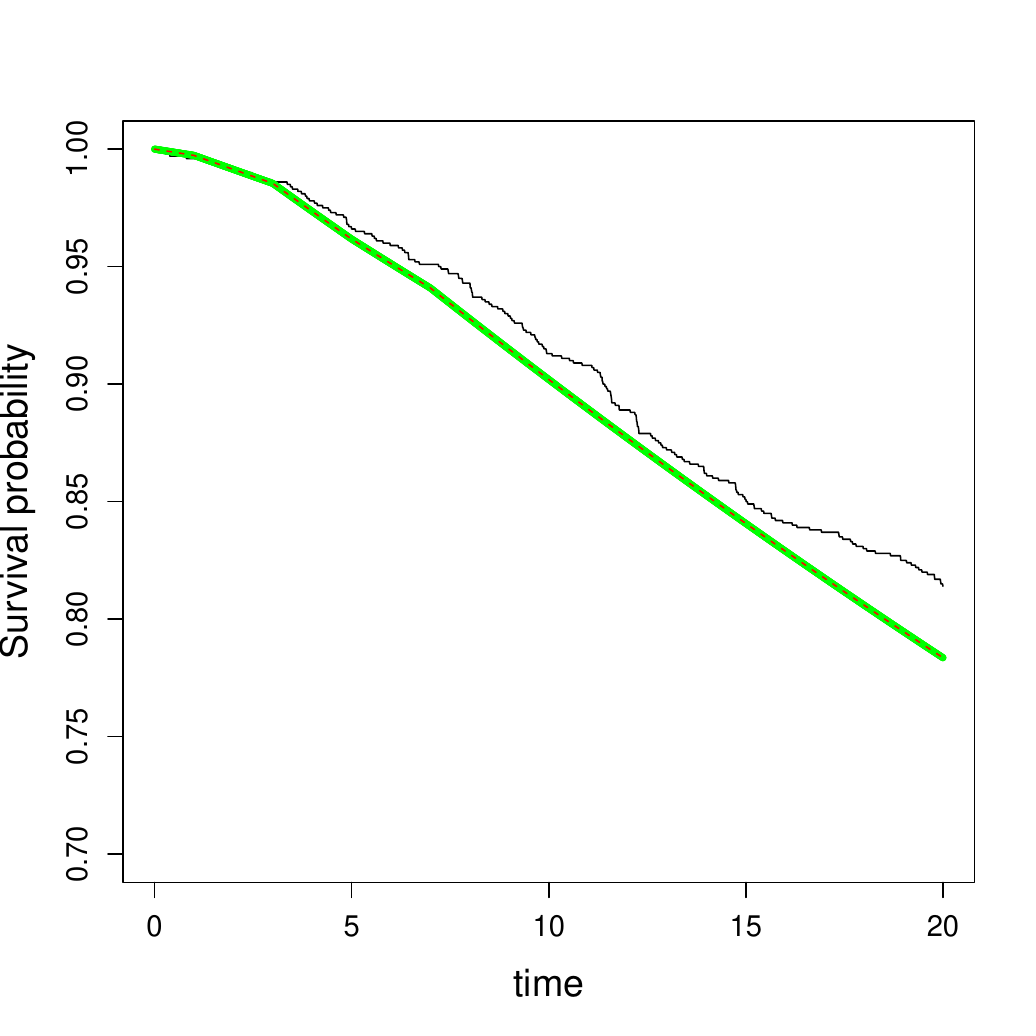}\label{fig:TC-AT1P}}
\subfigure[Implied clock]{\includegraphics[width=0.32\columnwidth]{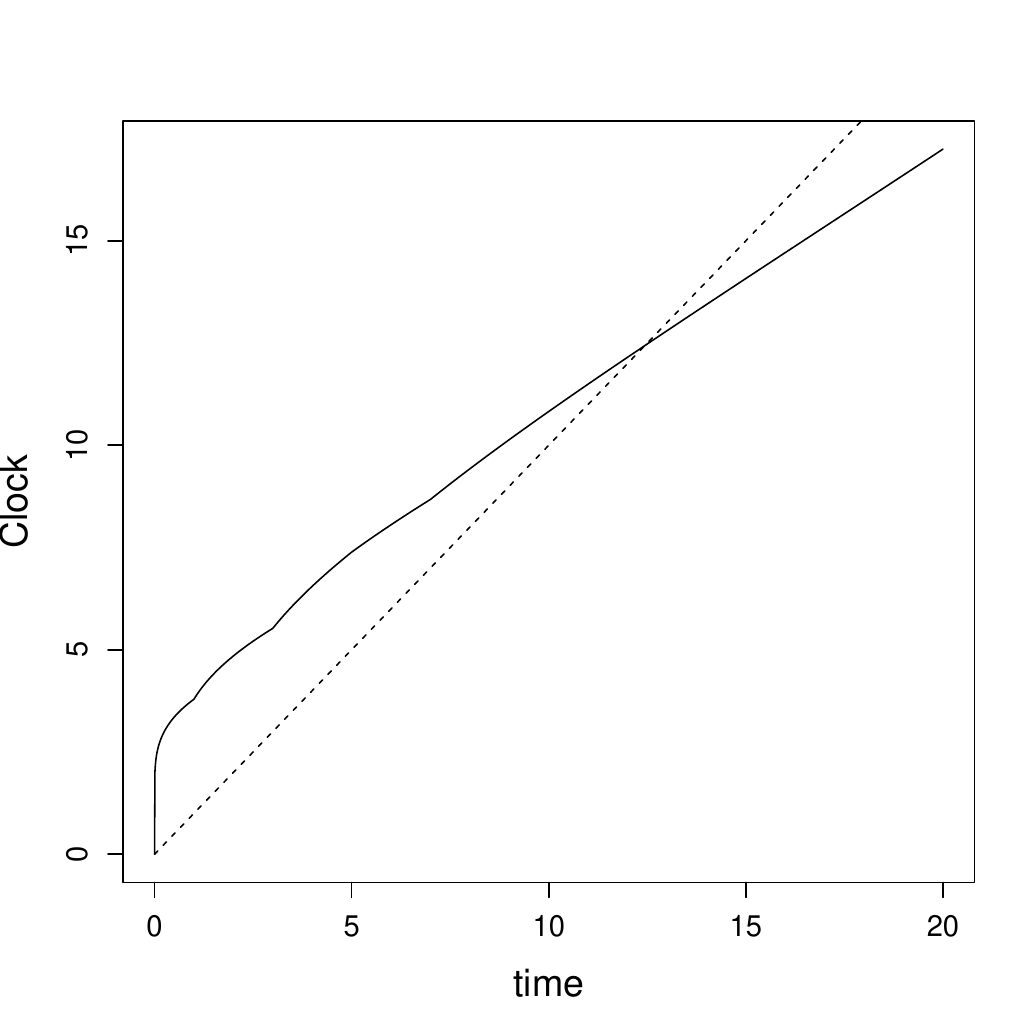}\label{fig:Clock-AT1P}}\\
\subfigure[AT1P model]{\includegraphics[width=0.32\columnwidth]{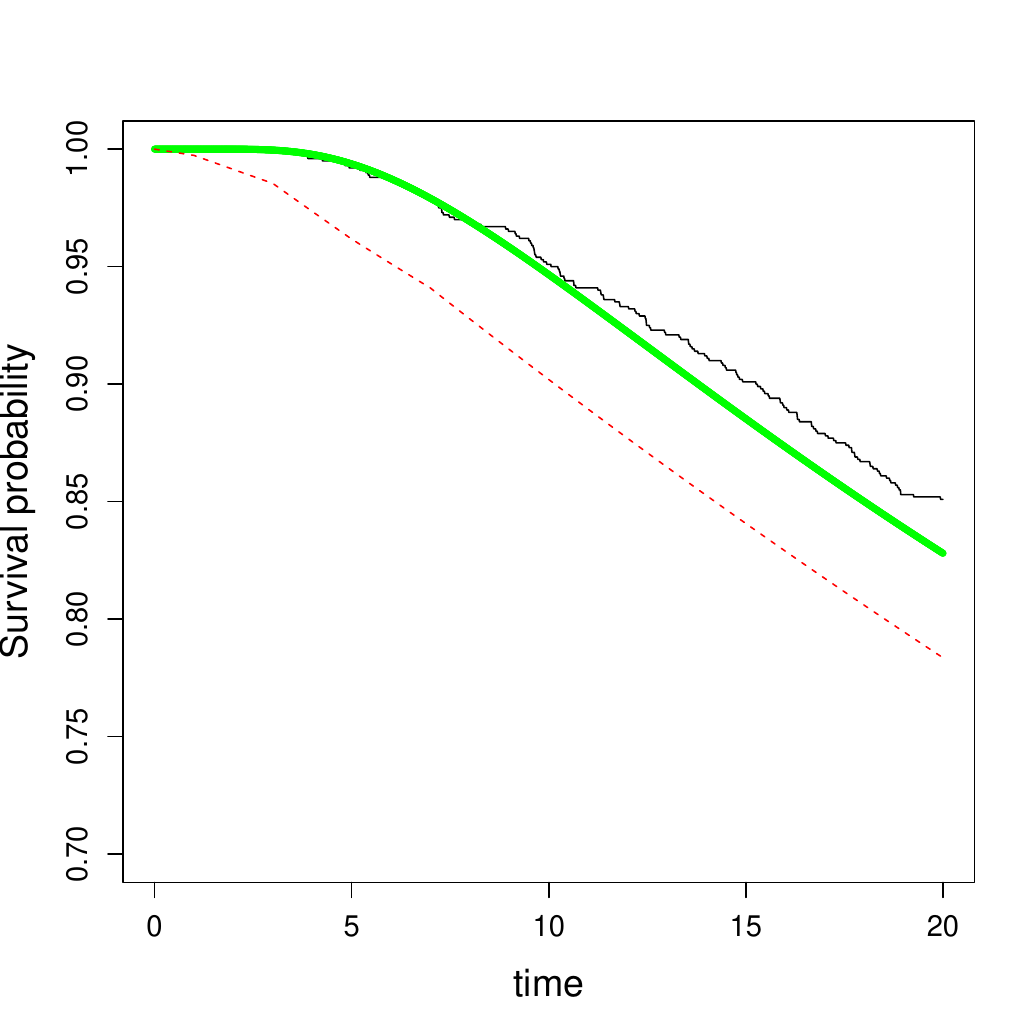}\label{fig:AT1P-B0d5}}
\subfigure[TC AT1P model]{\includegraphics[width=0.32\columnwidth]{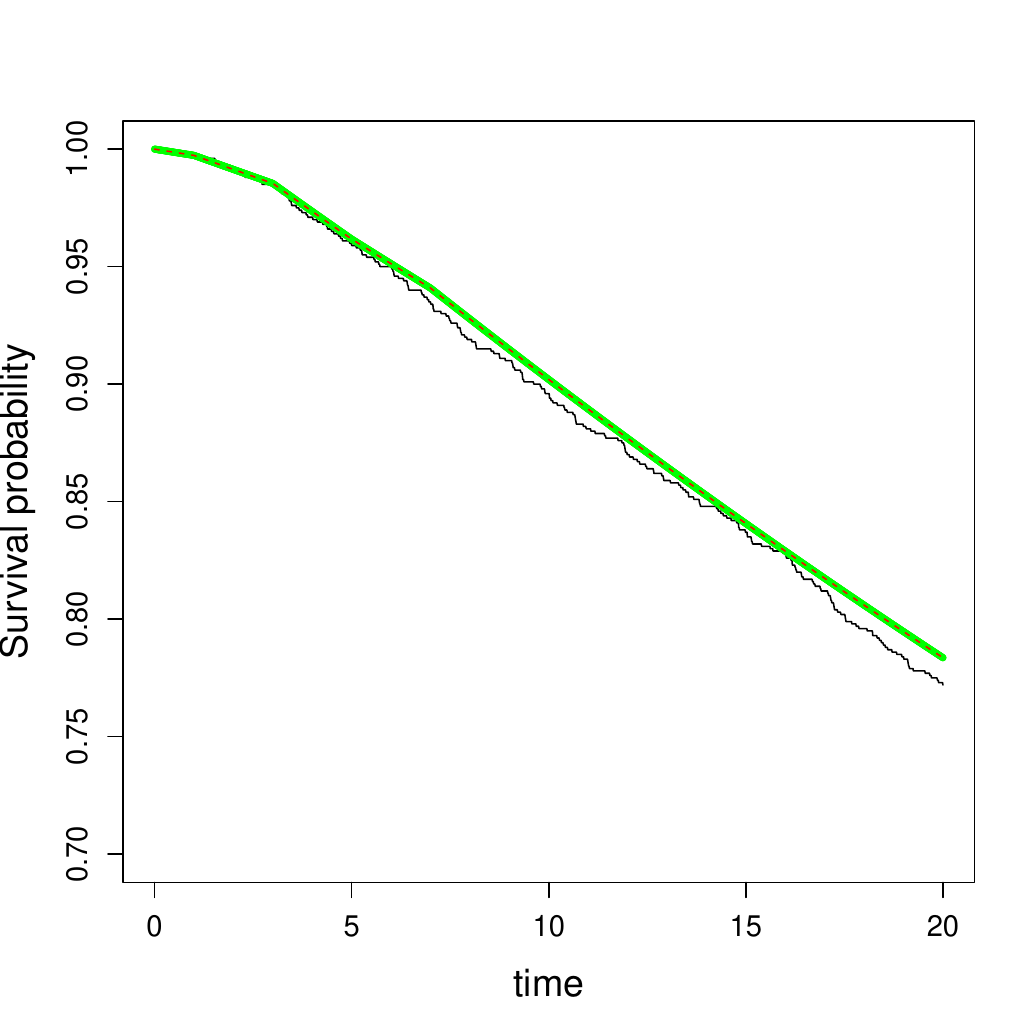}\label{fig:TC-AT1P-B0d5}}
\subfigure[Implied clock]{\includegraphics[width=0.32\columnwidth]{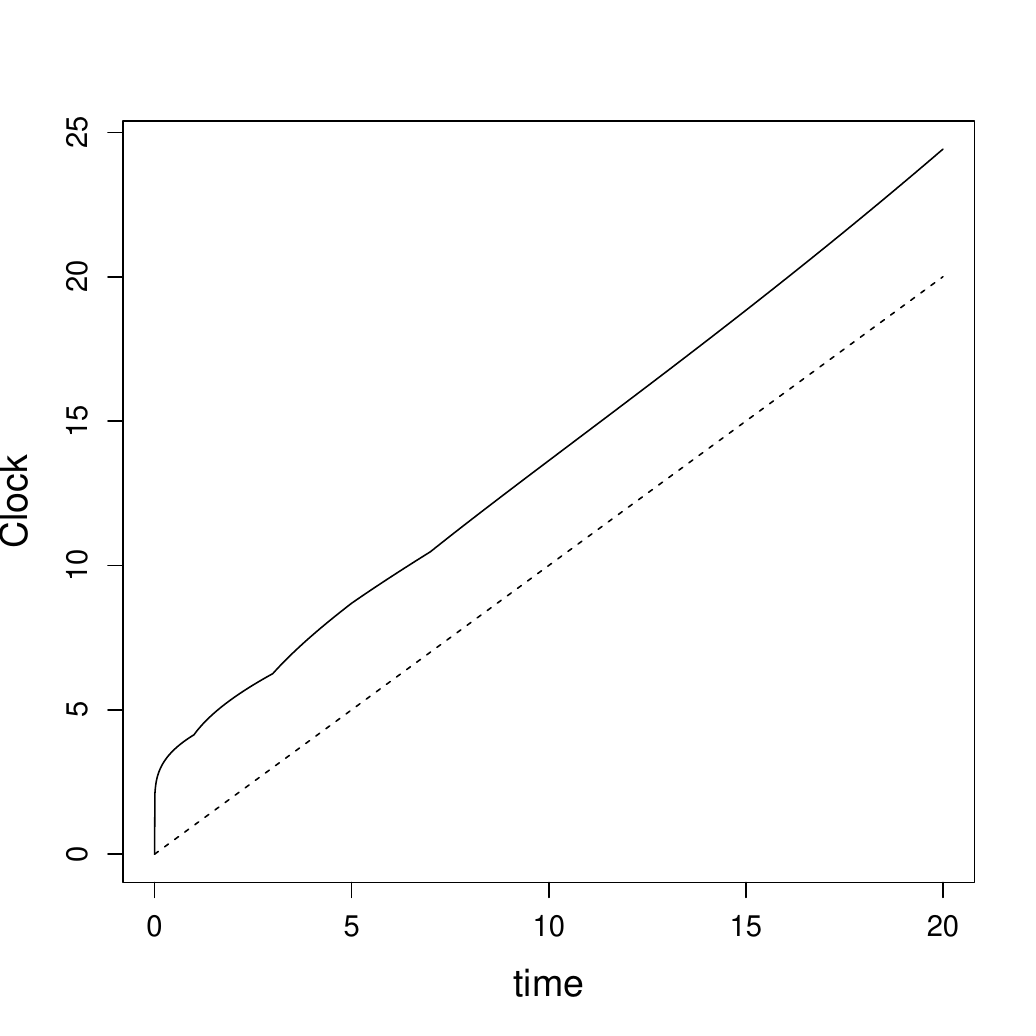}\label{fig:Clock-AT1P-B0d5}}
\caption{AT1P model and its perfect fit feature. Left: empirical (black) and theoretical (green) survival distributions of the AT1P, and the curve $G$ (red) built from the piecewise constant hazard rate function provided in Table 2 of \cite{brigomorinitarenghi}. Center: the empirical (black) and theoretical (green) survival distribution function of the default time in the time-changed AT1P with clock $\Theta=\Theta^{G,\tilde{W}}_k$, which now agree with $G$ (red). Right: implied clock. Parameters: $H=0.4, S_0=1,\sigma(u)=\sigma=0.15$ and $B=0$ (top) and $B=0.5$ (bottom).}\label{fig:AT1P}
\end{figure}

\section{Conclusion}
 
An important drawback of structural models is that most tractable specifications lack the flexibility to reproduce a prescribed survival probability curve from time $t=0$. We address this issue from a rather general perspective through deterministic time change: starting from a tractable first-passage model, we infer the clock for which the time-changed model has the specified default-time survival curve. We show that existence and uniqueness of the implied clock holds under mild conditions on the target curve and the latent model. The calibration is obtained by inversion of the latent survival curve. In the Brownian case, we recover the analytical expression provided in~\cite{OverbeckSchmidt2005}. For other tractable latent models, this can be efficiently done using numerical root-finding algorithms. In particular, the time-change representation clarifies the connection with the classic 2004 AT1P model \cite{Brigo2004,Brigo2005} and provides an exact calibration procedure for a regular target survival curve.
 
The construction also reproduces the short-end behavior encoded in the target survival curve. For continuous-path latent models, this may entail a clock rate, equivalently a variance rate, that becomes large \--- and actually explodes in standard settings\--- near the origin; nonetheless, the corresponding integrated variance remains finite on every finite horizon under the regularity conditions considered here.
 
Finally, calibration of the initial survival curve does not by itself determine all features relevant to a joint credit-and-market model. The clock impacts the instantaneous variance of the process and hence its cross-variation with other sources of risk. In counterparty-risk applications, for example, the clock or the associated volatility function may therefore affect the assessment of CVA under wrong-way risk. These considerations should be taken into account when specifying and using the model beyond the initial survival-curve calibration.


\ifdefined \MyBib
	\bibliography{\MyBib}
	\bibliographystyle{plain}
\fi

\section{Appendix}

\subsection{Local square-integrability of the diffusion coefficient in~\eqref{eq:sigma}}\label{app:sigmaL2}

\begin{theorem}\label{th:sigma}
Let $G$ be a regular survival function. Then the diffusion coefficient
defined in~\eqref{eq:sigma} satisfies
\[
\sigma\in L^2_{\mathrm{loc}}(\mathbb{R}_+),
\]
that is,
\[
\Sigma^2(T)=\int_0^T\sigma^2(t)\,dt<\infty,
\qquad T>0.
\]
\end{theorem}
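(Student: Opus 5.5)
The plan is to identify $\sigma^2$ with the clock rate of the Brownian implied clock and then integrate via the fundamental theorem of calculus for absolutely continuous functions. There is no need to estimate the singularity of $\sigma^2$ at the origin directly.

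\textbf{Step 1: identify $\sigma^2$ with the clock rate.} Write $\Theta:=\Theta_k^{G,W}$ as in \eqref{eq:ThetaCal} and set $q(t):=\Phi^{-1}\bigl(\frac{1+G(t)}{2}\bigr)$, so that $\Theta(t)=k^2/q(t)^2$. Since $G$ is absolutely continuous and $G'(t)<0$ a.e., we differentiate wherever $G$ is differentiable and $G(t)<1$, i.e.\ for a.e.\ $t>0$. Using $(\Phi^{-1})'(u)=1/\phi(\Phi^{-1}(u))$, this gives
\[
\Theta'(t)=-2k^2q(t)^{-3}q'(t)=-2k^2q(t)^{-3}\,\frac{G'(t)}{2\,\phi(q(t))}=\frac{k^2 g(t)}{q(t)^3\phi(q(t))}=\sigma^2(t).
\]
This is consistent with Lemma~\ref{lem:iclock}, since $g_k^W(s)=\frac{-k}{s^{3/2}}\phi(k/\sqrt{s})$. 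Note that $q(t)>0$ for $t>0$ because $G(t)<1$ there, so the square root in \eqref{eq:sigma} is well defined a.e.

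\textbf{Step 2: local absolute continuity of $\Theta$ on $[0,T]$.} On $[\varepsilon,T]$ with $\varepsilon>0$, $G$ takes values in $[G(T),G(\varepsilon)]\subset(0,1)$. There the map $u\mapsto k^2/\Phi^{-1}((1+u)/2)^2$ is $C^1$ with bounded derivative, and $G$ is absolutely continuous. Hence $\Theta$ is absolutely continuous on $[\varepsilon,T]$, and
\[
\int_\varepsilon^T\sigma^2(t)\,dt=\Theta(T)-\Theta(\varepsilon).
\]
(Alternatively, this step can be cited directly from Lemma~\ref{lem:iclock} together with Theorem~\ref{th:clock}.)

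\textbf{Step 3: pass to the limit $\varepsilon\downarrow0$.} Since $\sigma^2\ge0$, monotone convergence gives $\int_0^T\sigma^2=\Theta(T)-\lim_{\varepsilon\downarrow0}\Theta(\varepsilon)$. As $\varepsilon\downarrow0$, continuity of $G$ at $0$ with $G(0)=1$ gives $q(\varepsilon)\to+\infty$, so $\Theta(\varepsilon)\to0$. Therefore $\Sigma^2(T)=\Theta(T)=k^2/q(T)^2<\infty$ for every $T>0$, because $G(T)<1$ makes $q(T)$ finite and strictly positive.

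\textbf{Main obstacle.} The delicate point is the endpoint $t=0$, where $\sigma^2$ may blow up, as discussed in Remark~\ref{rem:ThetaBM:Explosion}. Step 3 handles this cleanly by working on $[\varepsilon,T]$, using positivity of $\sigma^2$, and relying on the fact that the composite map $\Theta$ stays continuous at $0$ even though its derivative may not be bounded there. A secondary point to check is that $G(t)<1$ for all $t>0$; this follows because $G$ is nonincreasing with $G'<0$ a.e., hence strictly decreasing.
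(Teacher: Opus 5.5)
Your proof is correct and follows essentially the same route as the paper. Both arguments establish absolute continuity of $\Theta_k^{G,W}$ on $[\varepsilon,T]$ by a Lipschitz composition with $G$, identify $\Theta'=\sigma^2$ almost everywhere, apply the fundamental theorem of calculus, and then let $\varepsilon\downarrow0$ by monotone convergence using $\Theta(\varepsilon)\to0$. The only cosmetic difference is that you compose $G$ directly with the single map $u\mapsto k^2/\Phi^{-1}((1+u)/2)^2$, whereas the paper first treats $z=\Phi^{-1}\circ\frac{1+G}{2}$ and then $k^2/z^2$.
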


\begin{proof}
Throughout the proof, set
\[
z(t):=\Phi^{-1}\left(\frac{1+G(t)}{2}\right),
\qquad
\Theta(t):=\left(\frac{k}{z(t)}\right)^2,
\qquad t>0,
\]
so that, by definition~\eqref{eq:sigma}, $\sigma^2(t)=k^2g(t)/\big(z(t)^3\phi(z(t))\big)$.
\medskip

\noindent\textit{Step 1: $z$ is absolutely continuous, and bounded away from $0$, on every compact subinterval of $(0,\infty)$.}
Fix $0<\varepsilon<T$. Since $G$ is a regular survival function (Definition~\ref{def:RegularSP}), $g=-G'>0$ almost everywhere, so $G$ is strictly decreasing on $[0,\infty)$; being also absolutely continuous and continuous, it maps $[\varepsilon,T]$ bijectively and monotonically onto the compact interval $[G(T),G(\varepsilon)]$, with
\[
0<G(T)<G(\varepsilon)<G(0)=1.
\]
Hence $\frac{1+G(t)}{2}$ ranges, for $t\in[\varepsilon,T]$, over the compact set $[a,b]:=\big[\tfrac{1+G(T)}2,\tfrac{1+G(\varepsilon)}2\big]\subset(0,1)$.
The function $\Phi^{-1}$ is continuously differentiable on $(0,1)$, with $(\Phi^{-1})'(u)=1/\phi(\Phi^{-1}(u))$; being continuous, this derivative is bounded on the compact set $[a,b]$, so $\Phi^{-1}$ is Lipschitz there. As a Lipschitz function composed with the absolutely continuous function $t\mapsto\frac{1+G(t)}2$, the function $z=\Phi^{-1}\circ\frac{1+G}2$ is itself absolutely continuous on $[\varepsilon,T]$. This shows that for all $t\in[\varepsilon,T]$, $z(t)$ belongs to the compact subset of $\big[\Phi^{-1}(a),\Phi^{-1}(b)\big]\subset(0,\infty)$
(since $a,b\in(\tfrac12,1)$ strictly).

\smallskip
\noindent\textit{Step 2: $\Theta$ is absolutely continuous on $[\varepsilon,T]$, with $\Theta'=\sigma^2$ there.}
Since $z$ takes values in the compact set $\big[\Phi^{-1}(a),\Phi^{-1}(b)\big]\subset(0,\infty)$ on $[\varepsilon,T]$, the map $x\mapsto k^2/x^2$ is Lipschitz on that set; composed with the absolutely continuous $z$, this shows $\Theta=k^2/z^2$ is absolutely continuous on $[\varepsilon,T]$. Differentiating $\Phi(z(t))=\frac{1+G(t)}2$ almost everywhere gives
\begin{equation}\label{eq:newproof1}
\phi(z(t))\,z'(t)=\frac{G'(t)}2=-\frac{g(t)}2,
\qquad\text{i.e.}\qquad
z'(t)=-\frac{g(t)}{2\phi(z(t))},
\qquad\text{a.e. }t\in[\varepsilon,T],
\end{equation}
leading to
\[
\Theta'(t)=-\frac{2k^2}{z(t)^3}\,z'(t)
=-\frac{2k^2}{z(t)^3}\left(-\frac{g(t)}{2\phi(z(t))}\right)
=\frac{k^2g(t)}{z(t)^3\phi(z(t))}
=\sigma^2(t),
\qquad\text{a.e. }t\in[\varepsilon,T].
\]

\smallskip
\noindent\textit{Step 3: Fundamental theorem of calculus on $[\varepsilon,T]$.}
Since $\Theta$ is absolutely continuous on $[\varepsilon,T]$,
\[
\int_\varepsilon^T\sigma^2(t)\,dt=\int_\varepsilon^T\Theta'(t)\,dt=\Theta(T)-\Theta(\varepsilon).
\]

\smallskip
\noindent\textit{Step 4: behaviour as $\varepsilon\downarrow0$, and passage to the limit.}
Because $G$ is regular, $G$ is continuous with $G(0)=1$, so $G(\varepsilon)\to1$ as $\varepsilon\downarrow0$, hence $\tfrac{1+G(\varepsilon)}2\to1$ and $z(\varepsilon)=\Phi^{-1}\big(\tfrac{1+G(\varepsilon)}2\big)\to+\infty$; consequently, $\lim_{\epsilon\downarrow 0}\Theta(\varepsilon)=0$.

The integrand $\sigma^2$ is nonnegative almost everywhere on $(0,T]$, and the intervals $[\varepsilon,T]$ increase to $(0,T]$ as $\varepsilon\downarrow0$. By the monotone convergence theorem applied to $\sigma^2\ind_{(0,T]}$,
\[
\int_0^T\sigma^2(t)\,dt
=\lim_{\varepsilon\downarrow0}\int_\varepsilon^T\sigma^2(t)\,dt
=\lim_{\varepsilon\downarrow0}\big[\Theta(T)-\Theta(\varepsilon)\big]
=\Theta(T).
\]
Finally, since $T>0$ and $G$ is regular, $G(T)\in(0,1)$ is an interior point of $(0,1)$, so $z(T)=\Phi^{-1}\big(\tfrac{1+G(T)}2\big)$ is a finite, strictly positive real number, and thus $\Theta(T)=k^2/z(T)^2<\infty$. Hence
\[
\Sigma^2(T)=\int_0^T\sigma^2(t)\,dt=\Theta(T)<\infty,
\qquad\forall T>0,
\]
which proves $\sigma\in L^2_{\mathrm{loc}}(\mathbb{R}_+)$.
\end{proof}

\begin{remark}\label{rem:sigmaL2-unconditional}
Two comments on Theorem~\ref{th:sigma} are worth making explicit.

First, the result holds for every regular survival function $G$ in the sense of Definition~\ref{def:RegularSP}, with no restriction on the local behaviour of the hazard rate $h$ near the origin: in particular, no growth condition is required. The possible blow-up of $\sigma^2(t)$ as $t\downarrow0$ described in Remark~\ref{rem:ThetaBM:Explosion} is therefore always compatible with $\sigma\in L^2_{\mathrm{loc}}(\mathbb{R}_+)$. 

Second, the function $\Theta$ constructed in the proof is not an auxiliary device introduced for convenience: comparing its definition with~\eqref{eq:ThetaCal} shows that it is the implied clock $\Theta_k^{G,W}$ of Theorem~\ref{th:clock}. The proof of Theorem~\ref{th:sigma} therefore also gives a direct, self-contained verification that $\Theta_k^{G,W}$ is absolutely continuous on every compact interval $[\varepsilon,T]\subset(0,\infty)$ and continuous at the origin with $\Theta_k^{G,W}(0)=0$. This complements Lemma~\ref{lem:iclock}. In the Brownian case treated here, Theorem~\ref{th:sigma} settles this point by an elementary, fully explicit argument (local Lipschitz composition on compact subintervals, plus monotone convergence at the origin).
\end{remark}

\subsection{Appendix: AT1P as a time-changed drifted Brownian motion}\label{app:AT1PasTC}

The trick consists in noting that the log firm-to-barrier ratio is of the form
\[
\mu\int_0^t\sigma^2(u)\,du+\int_0^t\sigma(u)\,dW_u
=\mu\Theta(t)+W_{\Theta(t)},
\qquad
\Theta(t):=\Sigma^2(t)=\int_0^t\sigma^2(u)\,du.
\]
Thus $\Theta$ is continuous and non-decreasing, which is sufficient for the
following change of time range. If $\Theta$ is to be called a clock in the
sense of Definition~\ref{def:clock}, additionally assume $\sigma^2>0$ almost everywhere
and $\Sigma^2(t)\to\infty$ as $t\to\infty$.
\medskip 

In AT1P, it holds that $S_0>H$ such that $\log (H/S_0)<0$. Setting $\mu=B-1/2$ and using the explicit solution of the process $S$, one gets
\begin{align*}
G^{\tiny\text{AT1P}}(t)
&=\Pr\left(\inf_{0\le s\le t}
\left\{\mu\Sigma^2(s)+W_{\Sigma^2(s)}\right\} >\log\frac{H}{S_0}\right)\\
&=\Pr\left(\inf_{0\le u\le\Sigma^2(t)}
\left\{\mu u+W_u\right\}>\log\frac{H}{S_0}\right)\\
&=\Phi\left(\frac{\mu\Sigma^2(t)+\log(S_0/H)}{\Sigma(t)}\right)
-\left(\frac{H}{S_0}\right)^{2\mu}
\Phi\left(\frac{\mu\Sigma^2(t)-\log(S_0/H)}{\Sigma(t)}\right),
\end{align*}
which is exactly~\eqref{eq:G-AT1P}. Here, the second equality uses that
$\Sigma^2$ is continuous and non-decreasing, so that it maps $[0,t]$ onto
$[0,\Sigma^2(t)]$.

\subsection{Calibration of the AT1P under the piecewise constant volatility parametrization}\label{app:CalAT1P}

The previous discussion shows that, if the volatility function in AT1P is left free, then the model can be calibrated exactly to any regular target survival curve $G$ by imposing
\[
\Sigma^2(t)=\Theta^{G,\tilde{W}}_{\log(K_0/S_0)}(t).
\]
In practice, however, the AT1P model is often parametrized with a volatility function that is piecewise constant on a prescribed grid. The next result shows that this restriction still allows one to perfectly reproduce a set of probabilities computed from a regular survival curve at prescribed times.

\begin{lemma}[Grid-point fit of $G$ using AT1P with pwc volatility]\label{th:pwcAT1P}
Consider the grid of times
\[
0=t_0<t_1<\ldots<t_n<\infty
\]
and let $G$ be a regular survival curve. Then, there exists an AT1P model whose volatility function is piecewise constant on each interval $[t_{i-1},t_i)$, delivering a perfect fit to $G$ on the grid, i.e.,
\[
G^{AT1P,pwc}(t_i)=G(t_i),\qquad i=1,\ldots,n.
\]
\end{lemma}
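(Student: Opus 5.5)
The plan is to use the time-change representation of AT1P from Appendix~\ref{app:AT1PasTC}, namely $G^{\text{AT1P}}(t)=G^{\tilde W}_k(\Sigma^2(t))$ with $k=\log(K_0/S_0)<0$ and drift $\mu=B-1/2\le 0$. This reduces the grid-fit problem to prescribing the values $\Sigma^2(t_i)$. Fix such $(K_0,S_0,B)$. By the footnote in Section~\ref{sec:BMdrift}, $G^{\tilde W}_k$ is then a regular survival function. Lemma~\ref{lem:iclock} therefore supplies the continuous, strictly decreasing inverse $I^{\tilde W}_k:(0,1]\to[0,\infty)$, and the implied clock $\Theta:=\Theta^{G,\tilde W}_k=I^{\tilde W}_k\circ G$. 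This $\Theta$ is a clock in the sense of Definition~\ref{def:clock}; in particular it satisfies $\Theta(0)=0$ and is strictly increasing.

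Next I define the target integrated variances $\vartheta_i:=\Theta(t_i)$ for $i=0,\ldots,n$. Strict monotonicity of $\Theta$ gives $0=\vartheta_0<\vartheta_1<\cdots<\vartheta_n<\infty$. Finiteness of $\vartheta_n$ holds because $G(t_n)\in(0,1)$, $G$ being regular, so $G(t_n)$ lies in the domain of $I^{\tilde W}_k$. I then set the piecewise constant volatility
\[
\sigma(t):=\sigma_i:=\sqrt{\frac{\vartheta_i-\vartheta_{i-1}}{t_i-t_{i-1}}}>0,\qquad t\in[t_{i-1},t_i),\quad i=1,\ldots,n,
\]
and extend it by $\sigma(t):=\sigma_n$ for $t\ge t_n$. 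The resulting $\Sigma^2(t)=\int_0^t\sigma^2(u)\,du$ is piecewise linear, strictly increasing and tends to $+\infty$, so it is itself a clock. Telescoping gives $\Sigma^2(t_i)=\sum_{j\le i}(\vartheta_j-\vartheta_{j-1})=\vartheta_i$.

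Finally, the identity from Appendix~\ref{app:AT1PasTC} yields
\[
G^{\text{AT1P,pwc}}(t_i)=G^{\tilde W}_k(\Sigma^2(t_i))=G^{\tilde W}_k(\Theta(t_i))=G(t_i),
\]
where the last equality is the defining property of the implied clock in Definition~\ref{def:iclock}. There is no genuine obstacle here. The only point needing care is that the increments $\vartheta_i-\vartheta_{i-1}$ are strictly positive, so that each $\sigma_i$ is real and positive. This follows from the strict monotonicity of $\Theta$, which in turn rests on Assumption~\ref{ass:A1} being met: regularity of $G$ is given, and regularity of $G^{\tilde W}_k$ is ensured by choosing $B\le 1/2$.
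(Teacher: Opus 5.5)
Your proof is correct and follows essentially the same route as the paper. Both arguments fix $(K_0,B)$ with $k=\log(K_0/S_0)<0$ and $\mu=B-1/2\le 0$, take the implied clock $\Theta^{G,\tilde W}_k$ from Lemma~\ref{lem:iclock}, and linearly interpolate it on the grid to get $\sigma_i^2=\bigl(\Theta^{G,\tilde W}_k(t_i)-\Theta^{G,\tilde W}_k(t_{i-1})\bigr)/(t_i-t_{i-1})>0$; both then conclude via $G^{\text{AT1P}}=G^{\tilde W}_k\circ\Sigma^2$, and the only difference is the immaterial extension beyond $t_n$ (you use $\sigma_n$, the paper uses $\sigma=1$).
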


\begin{proof}
Fix the AT1P parameters $(K_0,B)$ such that
\[
k=\log(K_0/S_0)<0,\qquad \mu=B-\frac{1}{2}\leq 0.
\]
By the discussion in Section~4.2, the AT1P survival curve can be written as
\[
G^{AT1P}(t)=G^{\tilde{W}}_k(\Sigma^2(t)),\qquad t\geq 0,
\]
where $\tilde{W}_t=\mu t+W_t$ and $\Sigma^2(t)=\int_0^t \sigma^2(u)\,du$ matches the implied clock associated with the target curve $G$, $\Sigma^2(t)=\Theta^{G,\tilde{W}}_k(t)$. Define now another clock $\Theta$ by linear interpolation between the values of the implied clock at the grid points:
\[
\Theta(t_i)=\Theta^{G,\tilde{W}}_k(t_i),\qquad i=0,\ldots,n,
\]
with $\Theta(0)=0$. 

On $[t_n,\infty)$, extend $\Theta$ with any strictly positive constant rate, for example set
\[ \Theta(t)=\Theta(t_n)+(t-t_n),\qquad t\ge t_n \]
Then $\Theta$ is a clock on $[0,\infty)$ in the sense of
Definition~\ref{def:clock}. On every calibration interval
$[t_{i-1},t_i)$ its rate is the positive constant
\[\theta_i= \frac{\Theta_k^{G,\tilde W}(t_i)-\Theta_k^{G,\tilde W}(t_{i-1})}{t_i-t_{i-1}},\qquad i=1,\ldots,n.\]
Taking $\sigma_i=\sqrt{\theta_i}$ gives a strictly positive piecewise constant AT1P volatility.

On $[0,t_n]$, take $\sigma(t)=\sqrt{\theta_i}$ on each interval
$[t_{i-1},t_i)$. On $(t_n,\infty)$, take, for example,
$\sigma(t)=1$. Then
\[
\Theta(t)=\int_0^t\sigma^2(u)\,du,
\qquad 0\le t\le t_n,
\]
and the corresponding AT1P model satisfies
\[
G^{\mathrm{AT1P,pwc}}(t)=G_k^{\widetilde W}(\Theta(t)).
\]

This shows that there exists a volatility function $\sigma$, piecewise constant on each interval $[t_{i-1},t_i)$, such that
\[
\Theta(t)=\int_0^t \sigma^2(u)\,du.
\]

Consider the corresponding AT1P model. Its survival curve satisfies
\[
G^{AT1P,pwc}(t)=G^{\tilde{W}}_k(\Theta(t)).
\]
Therefore, at each grid point $t_i$,
\[
G^{AT1P,pwc}(t_i)
=
G^{\tilde{W}}_k(\Theta(t_i))
=
G^{\tilde{W}}_k\!\left(\Theta^{G,\tilde{W}}_k(t_i)\right)
=
G(t_i),
\]
which proves the claim.
\end{proof}

\begin{remark}
Lemma~\ref{th:pwcAT1P} is a statement about matching a finite set of survival probabilities on a prescribed grid. This is closely related to, but distinct from, exact calibration to quoted CDS par spreads under a given market pricer. The latter depends on the precise pricing convention, including the dates entering the pricer and the interpolation/bootstrapping convention used to extract the target curve. We do not pursue this point here. Empirically, however, it is worth noting that structural calibrations such as AT1P often remain admissible in stressed situations where simple reduced-form hazard parametrizations may become unstable; in the present framework, this corresponds to an extreme but still admissible deterministic clock / variance profile.
\end{remark}

\end{document}